  \newcommand{\floor}[1]{\left\lfloor #1 \right\rfloor}
  \newcommand{\Oh}{\mathcal{O}}
  \newcommand{\pmilp}{$(\pm)\mathrm{ILP}$\xspace}
	\newcommand{\B}{\mathcal{B}}
	\newcommand{\M}{\mathcal{M}}
	\newcommand{\T}{\mathcal{T}}
  \renewcommand{\a}{\mathbf{a}}
  \newcommand{\x}{\mathbf{x}}
  \newcommand{\y}{\mathbf{y}}
  \newcommand{\s}{\mathbf{s}}
  \newcommand{\A}{\mathcal{A}}
  \newcommand{\E}{{E}}
  \newcommand{\F}{\mathcal{F}}
  \newcommand{\G}{\mathcal{G}}
  \newcommand{\dist}{\mathit{dist}}
  \newcommand{\sub}{\subseteq}
  \newcommand{\pint}{\mathbb{Z}_{\ge 0}}
  \newcommand{\I}{\mathcal{I}}
  \newcommand{\OPT}{\mathrm{OPT}}
  \newcommand{\ILP}{\mathrm{ILP}}
  \newtheorem{definition}{Definition}
  \newtheorem{observation}{Observation}
  \newtheorem{fact}{Fact}
  \newtheorem{example}{Example}
  \newtheorem{corollary}{Corollary}
  \newtheorem{lemma}{Lemma}
  \newtheorem{theorem}{Theorem}
  \newtheorem{remark}{Remark}
  \newtheorem{claim}{Claim}
  \date{}
  \title{On the String Consensus Problem and the Manhattan Sequence Consensus Problem}
\author[1]{Tomasz Kociumaka}
\author[2]{Jakub W. Pachocki}
\author[1]{Jakub Radoszewski}
\author[1,3]{Wojciech Rytter}
\author[1]{Tomasz Wale\'n}
 \affil[1]{Faculty of Mathematics, Informatics and Mechanics, University of Warsaw, Poland \texttt{[kociumaka, jrad, rytter, walen]@mimuw.edu.pl}}
 \affil[2]{Carnegie Mellon University, \texttt{pachocki@cs.cmu.edu}}
 \affil[3]{Faculty of Mathematics and Computer Science, Copernicus University, Toru\'n, Poland}
\begin{document}
  \maketitle
\begin{abstract}
In the \textsc{Manhattan Sequence Consensus}  problem (MSC problem) we are given  $k$ integer sequences,
each of length $\ell$, and we are to find an integer sequence ${\x}$ of length $\ell$
(called a consensus sequence), such that the maximum
Manhattan distance of ${\x}$
from each of the input  sequences  is minimized.
For binary sequences Manhattan distance coincides with Hamming distance, hence in this case
the string consensus problem (also called string center problem or closest string problem) is a special
case of MSC.
Our main result is a practically efficient $\Oh(\ell)$-time algorithm solving
MSC for $k\le 5$ sequences. Practicality of our algorithms has been
verified experimentally.
It improves upon the quadratic algorithm by Amir et al.\ (SPIRE 2012)
for string consensus problem for $k=5$ binary strings.
Similarly as in Amir's algorithm we use a column-based framework.
We replace the implied general integer linear programming
by its easy special cases,
due to combinatorial properties of the MSC for $k\le 5$.
We also show that for a general parameter $k$
any instance can be reduced in linear time to
a kernel of size $k!$, so the problem is fixed-parameter tractable.
Nevertheless, for $k\ge 4$ this is still too large for
any naive solution to be feasible in practice.
\end{abstract}

  \section{Introduction}
  In the sequence consensus problems, given a set of sequences of length $\ell$ we are searching for
  a new sequence of length $\ell$ which minimizes the maximum distance to all the
  given sequences in some particular metric.
  Finding the consensus sequence is a tool for many clustering algorithms
  and as such has applications in unsupervised learning, classification, databases,
  spatial range searching, data mining etc \cite{DBLP:conf/stoc/BadoiuHI02}.
  It is also one of popular methods for detecting data commonalities of many strings
  (see \cite{DBLP:conf/spire/AmirPR12}) and has a considerable number of applications
  in coding theory \cite{DBLP:journals/tit/CohenHLS97,DBLP:journals/mst/FrancesL97},
  data compression \cite{DBLP:journals/tit/GrahamS85}
  and bioinformatics \cite{DBLP:journals/algorithmica/GrammNR03,DBLP:conf/soda/LanctotLMWZ99}.
  The consensus problem has previously been studied mainly in $\mathbb{R}^\ell$ space with the Euclidean distance
  and in $\Sigma^\ell$ (that is, the space of sequences over a finite alphabet $\Sigma$) with the Hamming distance.
  Other metrics were considered in \cite{DBLP:journals/ipl/AmirPR13}.
  We study the sequence consensus problem for Manhattan metric ($\ell_1$ norm) in correlation with the Hamming-metric
  variant of the problem.

  The Euclidean variant of the sequence consensus problem is also known as the bounding sphere, enclosing sphere or enclosing ball problem.
  It was initially introduced in 2 dimensions (i.e., the smallest circle problem) by Sylvester in 1857 \cite{sylvester}.
  For an arbitrary number of dimensions, a number of approximation
  algorithms \cite{DBLP:conf/stoc/BadoiuHI02,Kumar:2003:AME:996546.996548,Ritter} 
  and practical exact algorithms \cite{DBLP:conf/esa/FischerGK03,DBLP:conf/compgeom/GartnerS00} for this problem
  have been proposed.

  The Hamming distance variant of the sequence consensus problem is known under the names of string consensus,
  center string or closest string problem.
  The problem is known to be NP-complete even for binary alphabet \cite{DBLP:journals/mst/FrancesL97}.
  The algorithmic study of Hamming string consensus (HSC) problem started in 1999 with the first
  approximation algorithms \cite{DBLP:conf/soda/LanctotLMWZ99}.
  Afterwards polynomial-time approximation schemes (PTAS) with different running times
  were presented in \cite{DBLP:conf/focs/AndoniIP06,DBLP:journals/siamcomp/MaS09,DBLP:conf/isit/MazumdarPS13}.
  A number of exact algorithms have also been presented.
  Many of these consider a decision version of the problem, in which we are to check if there is
  a solution to HSC problem with distance at most $d$ to the input sequences.
  Thus FPT algorithms with time complexities $\Oh(k\ell+kd^{d+1})$ and $\Oh(k\ell+kd(16|\Sigma|)^d)$ were presented
  in \cite{DBLP:journals/algorithmica/GrammNR03} and \cite{DBLP:journals/siamcomp/MaS09}, respectively.

  An FPT algorithm parameterized only by $k$ was given in \cite{DBLP:journals/algorithmica/GrammNR03}.
  It uses Lenstra's algorithm \cite{Lenstra} for a solution of an integer linear program of size exponential in $k$
  (which requires $\Oh(k!^{4.5k!}\ell)$ operations on integers of magnitude $\Oh(k!^{2k!}\ell)$, see \cite{DBLP:conf/spire/AmirPR12})
  and due to extremely large constants is not feasible for $k \ge 4$.
  This opened a line of research with efficient algorithms for small constant $k$.
  A linear-time algorithm for $k=3$ was presented in \cite{DBLP:journals/algorithmica/GrammNR03},
  a linear-time algorithm for $k=4$ and binary alphabet was given in \cite{DBLP:conf/spire/BoucherBD08},
  and recently an $\Oh(\ell^2)$-time algorithm for $k=5$ and also binary alphabet was developed in \cite{DBLP:conf/spire/AmirPR12}.

  For two sequences $\x=(x_1,\ldots,x_\ell)$ and $\y=(y_1,\ldots,y_\ell)$ the Manhattan distance (also known as rectilinear
  or taxicab distance) between
  $\x$ and $\y$ is defined as follows:
  $$\dist(\x,\y) = \sum_{j=1}^\ell |x_j-y_j|.$$
  The Manhattan version of the consensus problem is formally defined as follows:

  \vskip 0.1cm \noindent {{\scshape{Manhattan Sequence Consensus}} \bf problem}
  \begin{description}

\item[Input:] A collection $\A$ of $k$ integer sequences $\a_i$, each of length $\ell$;

\vskip 0.1cm

\item[Output:] $\OPT(\A)\;=\; \min_{\x}\;\max\; \{\dist(\x,\a_i)\;:\; 1\le i\le k\}$,\\ and the corresponding integer consensus sequence $\x$.
  \end{description}
We assume that integers $a_{i,j}$ satisfy $|a_{i,j}|\le M$, and all $\ell,k,M$ fit in a machine word,
so that arithmetics on integers of magnitude $\Oh(\ell M)$ take constant time.

  For simplicity in this version of the paper we concentrate on computing $\OPT(\A)$
  and omit the details of recovering the corresponding consensus sequence $\x$.
  Nevertheless, this step is included in the implementation provided.

  \begin{example}
    Let $\A=((120,0,80),\;(20,40,130),\;(0,100,0))$.
    Then $\OPT(\A)=150$ and a consensus sequence is $\x = (30,40,60)$, see also Fig.~\ref{fig:first_example}.
  \end{example}

  \noindent
  Our results are the following:
  \begin{itemize}
    \item
      We show that {\scshape{Manhattan Sequence Consensus}} problem has a kernel with $\ell\le k!$
      and is fixed-parameter linear with the parameter $k$.
    \item
      We present a practical linear-time algorithm for the {\scshape{Manhattan Sequence Consensus}} problem
      for $k = 5$ (which obviously can be used for any $k \le 5$).
  \end{itemize}

  \noindent
  Note that binary HSC problem is a special case of MSC problem.
  Hence, the latter problem is NP-complete.
  Moreover, the efficient linear-time algorithm presented here for MSC problem for $k=5$ yields
  an equally efficient linear-time algorithm for the binary HSC problem and thus improves the result of \cite{DBLP:conf/spire/AmirPR12}.

  \paragraph{Organization of the paper.}
  Our approach is based on a reduction of the MSC problem to instances of integer linear programming (ILP).
  For general constant $k$ we obtain a constant, though a very large, number of instances
  with a constant number of variables that we solve using Lenstra's algorithm \cite{Lenstra} which works in constant time
  (the constant coefficient of this algorithm is also very large).
  This idea is similar to the one used in the FPT algorithm for
  HSC problem \cite{DBLP:journals/algorithmica/GrammNR03},
  however for MSC it requires an additional combinatorial observation.
  For $k \le 5$ we obtain a more efficient reduction of MSC to at most
  20 instances of very special ILP which we solve efficiently without applying a general ILP solver.

  In Section~\ref{sec:prelim} we show the first steps of the reduction of MSC to ILP.
  In Section~\ref{sec:ker} we show a kernel for the problem of $\Oh(k!)$ size.
  In Section~\ref{sec:comb5} we perform a combinatorial analysis of the case $k=5$
  which leaves 20 simple types of the sequence $\x$ to be considered.
  This analysis is used in Section~\ref{sec:algo5} to obtain 20 special ILP instances with only 4 variables.
  They could be solved using Lenstra's ILP solver.
  However, there exists an efficient algorithm tailored for this type of special instances.
  It can be found in Section~\ref{app:ilp}.
  Finally we analyze the performance of a C++ implementation of our algorithm
  in the Conclusions section.

  \section{From MSC Problem to ILP}\label{sec:prelim}
  Let us fix a collection $\A=(\a_1,\ldots,\a_k)$ of the input sequences.
  The elements of $\a_i$ are denoted by $a_{i,j}$ (for $1\le j \le \ell$).
  We also denote $\dist(\x,\A) = \max\; \{\dist(\x,\a_i)\;:\; 1\le i\le k\}$.

  For $j\in \{1,\ldots,n\}$
  let $\pi_j$ be a permutation of $\{1,\ldots,k\}$ such that
  $a_{\pi_j(1),j}\le \ldots \le a_{\pi_j(k),j}$,
  i.e. $\pi_j$ is the ordering permutation of elements $a_{1,j},\ldots,a_{k,j}$.
  We also set $s_{i,j} = a_{\pi_j(i),j}$, see Example~\ref{ex:first_example}.
  For some $j$ there might be several possibilities for $\pi_j$ (if $a_{i,j}=a_{i',j}$ for some $i\ne i'$),
  we fix a single choice for each~$j$.

  \begin{example}\label{ex:first_example}
    Consider the following 3 sequences $\a_i$ and sequences $\s_i$ obtained by sorting columns:

    $$[a_{i,j}] =
      \begin{bmatrix}
        120 & 0 & 80\\
        20 & 40 & 130\\
        0 & 100 & 0
      \end{bmatrix},
      \quad
      [s_{i,j}] =
      \begin{bmatrix}
        0 & 0 & 0\\
        20 & 40 & 80\\
        120 & 100 & 130
      \end{bmatrix}.
      $$

    The Manhattan consensus sequence is $\x=(30,40,60)$, see Fig.~\ref{fig:first_example}.
    In the figure, the circled numbers in $j$-th column are
    $\pi_j(1),\pi_j(2),\ldots, \pi_j(k)$  (top-down).
    \end{example}

    \begin{figure}[htb]
    \begin{center}
      \begin{tikzpicture}[scale=0.4,yscale=-1,xscale=0.5]
  \draw[rounded corners, thick, densely dotted] (0,2.025) -- (7,2.7) -- (14,4.05);
  \filldraw[white] (7,3) circle (1cm);
  \draw[-latex,thick,yshift=1cm] (-5,0) -- (-5,6);
  \draw (0,0) node[circle,draw,inner sep=2] (A31) {\scriptsize 3};
  \draw (0,1.35) node[circle,draw,inner sep=2] (A21) {\scriptsize 2};
  \draw (0,8.1) node[circle,draw,inner sep=2] (A11) {\scriptsize 1};
  \draw (7,0) node[circle,draw,inner sep=2] (A12) {\scriptsize 1};
  \draw (7,2.7) node[circle,draw,inner sep=2] (A22) {\scriptsize 2};
  \draw (7,6.75) node[circle,draw,inner sep=2] (A32) {\scriptsize 3};
  \draw (14,0) node[circle,draw,inner sep=2] (A33) {\scriptsize 3};
  \draw (14,5.4) node[circle,draw,inner sep=2] (A13) {\scriptsize 1};
  \draw (14,8.775) node[circle,draw,inner sep=2] (A23) {\scriptsize 2};
  \draw (0,0) node[left=0.2cm] {0};
  \draw (0,1.35) node[left=0.2cm] {20};
  \draw (0,8.1) node[left=0.2cm] {120};
  \draw (7,0) node[left=0.2cm] {0};
  \draw (7,2.9) node[left=0.2cm] {40};
  \draw (7,6.75) node[left=0.2cm] {100};
  \draw (14,0) node[right=0.2cm] {0};
  \draw (14,5.4) node[right=0.2cm] {80};
  \draw (14,8.775) node[right=0.2cm] {130};
  \draw (A31) -- (A21)  (A21) -- (A11);
  \draw (A12) -- (A22)  (A22) -- (A32);
  \draw (A33) -- (A13)  (A13) -- (A23);
  \filldraw (0,2.025) circle (0.15cm);
  \filldraw (14,4.05) circle (0.15cm);
  \draw (0,2.025) node[below left] {$x_1$};
  \draw (7,2.7) node[right=0.2cm] {$x_2$};
  \draw (14,4.05) node[right] {$x_3$};
\end{tikzpicture}
    \end{center}
    \caption{
      \label{fig:first_example}
      Illustration of Example~\ref{ex:first_example};
       $\pi_1=(3,2,1)$, $\pi_2=(1,2,3)$, $\pi_3=(3,1,2)$.
    }
  \end{figure}

  \vspace*{-0.2cm}
  \begin{definition}
    A \emph{basic interval} is an interval
    of the form $[i,i+1]$ (for $i=1,\ldots,k-1$) or $[i,i]$ (for $i=1,\ldots,k$).
    The former is called \emph{proper}, and the latter \emph{degenerate}.
    An \emph{interval system} is a sequence $\I=(I_1,\ldots,I_\ell)$
    of basic intervals $I_j$.
  \end{definition}

 For a basic interval $I_j$ we say that a value $x_j$ is consistent with $I_j$
 if $x_j\in \{s_{i,j},\ldots,s_{i+1,j}\}$ when $I_j=[i,i+1]$ is proper, and if $x_j = s_{i,j}$ when $I_j=[i,i]$
 is degenerate.
  A sequence $\x$ is called consistent with an interval system $\I=(I_j)_{j=1}^\ell$
  if for each $j$ the value $x_j$ is consistent with $I_j$.

  For an interval system $\I$ we define $\OPT(\A,\I)$
  as the minimum $\dist(\x,\A)$ among all integer sequences $\x$ consistent with $\I$.
  Due to the following trivial observation, for every $\A$ there exists an interval system $\I$
  such that $\OPT(\A) = \OPT(\A,\I)$.
     \begin{observation}\label{obs:bet}
    If $\x$ is a Manhattan consensus sequence then for each $j$,
    $s_{1,j} \le x_j \le s_{k,j}$.
  \end{observation}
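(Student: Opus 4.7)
\medskip
\noindent\textbf{Proof plan for Observation~\ref{obs:bet}.}
The plan is to use a one-coordinate exchange argument: assuming some coordinate $x_j$ violates the stated range, I would modify just that coordinate to land inside the range and show that the new sequence strictly decreases every individual distance $\dist(\cdot,\a_i)$, contradicting the optimality of $\x$.

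Concretely, suppose for contradiction that $x_j < s_{1,j}$ for some index $j$. By definition of $s_{1,j} = \min_i a_{i,j}$, we have $x_j < s_{1,j} \le a_{i,j}$ for every $i\in\{1,\ldots,k\}$, so $|x_j - a_{i,j}| = a_{i,j}-x_j$. Define $\x'$ by setting $x'_j = s_{1,j}$ and $x'_{j'}=x_{j'}$ for $j'\neq j$. Then for every $i$,
$$|x'_j - a_{i,j}| = a_{i,j} - s_{1,j} < a_{i,j} - x_j = |x_j - a_{i,j}|,$$
so $\dist(\x',\a_i) < \dist(\x,\a_i)$ for every $i$. Taking the maximum over $i$ yields $\dist(\x',\A) < \dist(\x,\A)$, contradicting the assumption that $\x$ is a Manhattan consensus sequence. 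The symmetric argument, starting from $x_j > s_{k,j}$ and replacing $x_j$ by $s_{k,j}$, rules out the other direction.

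There is no real obstacle here: the observation follows from the elementary fact that $|x - a|$ is a monotone function of $x$ on each side of $a$, combined with the fact that all $a_{i,j}$ lie in the closed interval $[s_{1,j}, s_{k,j}]$. The only point that deserves a brief remark is that we need the strict decrease on every $\dist(\x,\a_i)$ (not just the maximum) so that the conclusion is immediate regardless of which index attains the max, and this is what the displayed inequality above provides.
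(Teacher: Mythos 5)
Your argument is correct and is exactly the standard exchange argument the paper has in mind; the paper itself states Observation~\ref{obs:bet} without proof, calling it trivial. Clamping $x_j$ to $s_{1,j}$ (resp.\ $s_{k,j}$) strictly decreases $|x_j-a_{i,j}|$ for every $i$ and leaves the other coordinates untouched, so every $\dist(\x,\a_i)$ and hence the maximum strictly drops, contradicting optimality --- nothing more is needed.
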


  \paragraph{Transformation of the input to an ILP.}
  Note that for all sequences $\x$ consistent with a fixed $\I$, the Manhattan distances
  $\dist(\x, \a_i)$ can be expressed as $d_i+\sum_{j=1}^\ell e_{i,j}x_j$ with $e_{i,j}=\pm 1$.
  Thus, the problem of finding $\OPT(\A,\I)$ can be formulated as an ILP, which we denote $\ILP(\I)$.
  If $I_j$ is a proper interval $[i,i+1]$, we introduce a variable $x_j\in \{s_{i,j},\ldots,s_{i+1,j}\}$.
  Otherwise we do not need a variable $x_j$.
  The $i$-th constraint of $\ILP(\I)$ algebraically represents $\dist(\x, \a_i)$,
  see Example~\ref{ex:big_example}.
  \begin{observation}
    The optimal value of $\ILP(\I)$ is equal to $\OPT(\A, \I)$.
  \end{observation}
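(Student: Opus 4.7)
The plan is to verify the observation by checking that, on the feasible region defined by consistency with $\I$, the Manhattan distance $\dist(\x,\a_i)$ coincides with the affine expression $d_i+\sum_j e_{i,j}x_j$ used as the $i$-th constraint of $\ILP(\I)$. Since the two formulations then share the same feasible set and the same objective $\max_i \dist(\x,\a_i)$, their optima must agree.

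First I would handle a single column $j$. If $I_j=[i,i]$ is degenerate, then every consistent $\x$ has $x_j=s_{i,j}$, so each term $|x_j-a_{r,j}|=|s_{i,j}-a_{r,j}|$ is a constant that can be absorbed into the offsets $d_r$, and no variable is introduced. If $I_j=[i,i+1]$ is proper, then any consistent $\x$ satisfies $s_{i,j}\le x_j\le s_{i+1,j}$. By definition of the sorting permutation $\pi_j$, this forces $a_{r,j}\le x_j$ for every $r\in\{\pi_j(1),\dots,\pi_j(i)\}$ and $a_{r,j}\ge x_j$ for every $r\in\{\pi_j(i+1),\dots,\pi_j(k)\}$. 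Hence the absolute value collapses to a signed linear function:
\[
|x_j-a_{r,j}| \;=\; \varepsilon_{r,j}\,(x_j-a_{r,j}), \qquad \varepsilon_{r,j}=\begin{cases}+1 & \text{if }\pi_j^{-1}(r)\le i,\\ -1 & \text{if }\pi_j^{-1}(r)\ge i+1,\end{cases}
\]
with the tie at $x_j=s_{i,j}$ or $x_j=s_{i+1,j}$ being consistent because the equal factor vanishes.

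Summing these per-column identities over $j=1,\dots,\ell$ yields, for every $\x$ consistent with $\I$,
\[
\dist(\x,\a_r) \;=\; d_r + \sum_{j:\,I_j \text{ proper}} e_{r,j}\, x_j,
\]
where $e_{r,j}=\varepsilon_{r,j}\in\{+1,-1\}$ and $d_r$ is the sum of the constant contributions. This is exactly the affine expression that $\ILP(\I)$ uses for its $r$-th distance constraint. At the same time, the domain $x_j\in\{s_{i,j},\dots,s_{i+1,j}\}$ prescribed by $\ILP(\I)$ encodes precisely integer consistency with $I_j$, and the degenerate coordinates are fixed identically in both formulations.

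Consequently, the set of feasible integer $\x$ for $\ILP(\I)$ coincides with the set of integer sequences consistent with $\I$, and on this set the ILP objective $\min\max_r(d_r+\sum_j e_{r,j}x_j)$ equals $\min\max_r \dist(\x,\a_r)=\OPT(\A,\I)$, proving equality of the two optima. No step poses a real obstacle; the only thing worth stating carefully is why the sign decomposition $\varepsilon_{r,j}$ is well-defined despite ties, which is covered by the fact that a tied term simply vanishes regardless of the chosen sign.
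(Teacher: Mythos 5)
Your proof is correct and follows exactly the route the paper intends: the paper states the linearization $\dist(\x,\a_i)=d_i+\sum_j e_{i,j}x_j$ for $\x$ consistent with $\I$ in the paragraph preceding the observation and treats the equality of optima as immediate, while you merely spell out the per-column sign analysis and the coincidence of feasible sets. No differences in substance.
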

      \begin{figure}[bhtp]
    \begin{center}
      \tikzset{
    every node/.style={
        inner sep=2
    }
    }
    
\begin{tikzpicture}[scale=0.45,yscale=-.99,xscale=1.3]

\draw (0,11) node[circle,draw] (N11) {\scriptsize 2};
\draw (0,11) node[left=0.3cm] {11};
\draw (0,14) node[circle,draw] (N21) {\scriptsize 3};
\draw (0,14) node[left=0.3cm] {14};
\draw (0,16) node[circle,draw] (N31) {\scriptsize 5};
\draw (0,16) node[left=0.3cm] {16};
\draw (0,19) node[circle,draw] (N41) {\scriptsize 4};
\draw (0,19) node[left=0.3cm] {19};
\draw (0,20) node[circle,draw] (N51) {\scriptsize 1};
\draw (0,20) node[left=0.3cm] {20};
\draw (3,6) node[circle,draw] (N12) {\scriptsize 2};
\draw (3,6) node[left=0.3cm] {6};
\draw (3,8) node[circle,draw] (N22) {\scriptsize 4};
\draw (3,8) node[left=0.3cm] {8};
\draw (3,12) node[circle,draw] (N32) {\scriptsize 3};
\draw (3,12) node[left=0.3cm] {12};
\draw (3,15) node[circle,draw] (N42) {\scriptsize 5};
\draw (3,15) node[left=0.3cm] {15};
\draw (3,18) node[circle,draw] (N52) {\scriptsize 1};
\draw (3,18) node[left=0.3cm] {18};
\draw (6,7) node[circle,draw] (N13) {\scriptsize 2};
\draw (6,7) node[left=0.3cm] {7};
\draw (6,11) node[circle,draw] (N23) {\scriptsize 5};
\draw (6,11) node[left=0.3cm] {11};
\draw (6,16) node[circle,draw] (N33) {\scriptsize 4};
\draw (6,16) node[left=0.3cm] {16};
\draw (6,18) node[circle,draw] (N43) {\scriptsize 3};
\draw (6,18) node[left=0.3cm] {18};
\draw (6,20) node[circle,draw] (N53) {\scriptsize 1};
\draw (6,20) node[left=0.3cm] {20};
\draw (9,10) node[circle,draw] (N14) {\scriptsize 1};
\draw (9,10) node[left=0.3cm] {10};
\draw (9,13) node[circle,draw] (N24) {\scriptsize 3};
\draw (9,13) node[left=0.3cm] {13};
\draw (9,15) node[circle,draw] (N34) {\scriptsize 5};
\draw (9,15) node[left=0.3cm] {15};
\draw (9,17) node[circle,draw] (N44) {\scriptsize 2};
\draw (9,17) node[left=0.3cm] {17};
\draw (9,18) node[circle,draw] (N54) {\scriptsize 4};
\draw (9,18) node[left=0.3cm] {18};
\draw (12,6) node[circle,draw] (N15) {\scriptsize 5};
\draw (12,6) node[left=0.3cm] {6};
\draw (12,11) node[circle,draw] (N25) {\scriptsize 3};
\draw (12,11) node[left=0.3cm] {11};
\draw (12,12) node[circle,draw] (N35) {\scriptsize 4};
\draw (12,12) node[left=0.3cm] {12};
\draw (12,14) node[circle,draw] (N45) {\scriptsize 2};
\draw (12,14) node[left=0.3cm] {14};
\draw (12,16) node[circle,draw] (N55) {\scriptsize 1};
\draw (12,16) node[left=0.3cm] {16};
\draw (15,6) node[circle,draw] (N16) {\scriptsize 3};
\draw (15,6) node[left=0.3cm] {6};
\draw (15,8) node[circle,draw] (N26) {\scriptsize 1};
\draw (15,8) node[left=0.3cm] {8};
\draw (15,14) node[ultra thick,circle,draw] (N36) {\scriptsize 2};
\draw (15,14) node[left=0.3cm] {14};
\draw (15,17) node[circle,draw] (N46) {\scriptsize 5};
\draw (15,17) node[left=0.3cm] {17};
\draw (15,19) node[circle,draw] (N56) {\scriptsize 4};
\draw (15,19) node[left=0.3cm] {19};
\draw (18,10) node[circle,draw] (N17) {\scriptsize 1};
\draw (18,10) node[left=0.3cm] {10};
\draw (18,11) node[circle,draw] (N27) {\scriptsize 5};
\draw (18,11) node[left=0.3cm] {11};
\draw (18,12) node[circle,draw] (N37) {\scriptsize 3};
\draw (18,12) node[left=0.3cm] {12};
\draw (18,17) node[circle,draw] (N47) {\scriptsize 2};
\draw (18,17) node[left=0.3cm] {17};
\draw (18,19) node[circle,draw] (N57) {\scriptsize 4};
\draw (18,19) node[left=0.3cm] {19};
\draw (N11) -- (N21);
\draw[ultra thick] (N21) -- node[right] {$I_1$} (N31);
\draw (N31) -- (N41);
\draw (N41) -- (N51);
\draw (N12) -- (N22);
\draw[ultra thick] (N22) -- node[right] {$I_2$} (N32);
\draw (N32) -- (N42);
\draw (N42) -- (N52);
\draw (N13) -- (N23);
\draw[ultra thick] (N23) -- node[right] {$I_3$} (N33);
\draw (N33) -- (N43);
\draw (N43) -- (N53);
\draw (N14) -- (N24);
\draw (N24) -- (N34);
\draw[ultra thick] (N34) -- node[right] {$I_4$} (N44);
\draw (N44) -- (N54);
\draw (N15) -- (N25);
\draw (N25) -- (N35);
\draw[ultra thick] (N35) -- node[right] {$I_5$} (N45);
\draw (N45) -- (N55);
\draw (N16) -- (N26);
\draw (N26) -- (N36);
\draw (N36) -- (N46);
\draw (N46) -- (N56);
\draw (N17) -- (N27);
\draw (N27) -- (N37);
\draw[ultra thick] (N37) -- node[right] {$I_7$} (N47);
\draw (N47) -- (N57);
\draw (15,14) node[right=0.3cm] {$I_6$};
\end{tikzpicture}
    \end{center}
    \caption{
      \label{fig:big_example}
      Illustration of Example~\ref{ex:big_example}: 5 sequences of length 7
      together with an interval system.
      Notice that $I_6$ is a degenerate interval.
    }
  \end{figure}
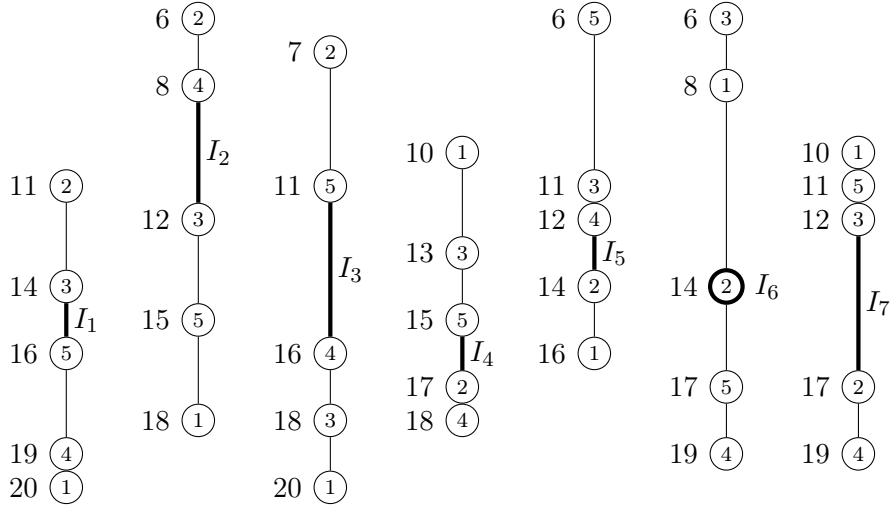

  \vspace*{-0.2cm}
  \begin{example}\label{ex:big_example}
    Consider the following 5 sequences of length 7:
    $$[a_{i,j}] =
      \begin{bmatrix}
        20 & 18 & 20 & 10 & 16 &  8 & 10\\
        11 &  6 &  7 & 17 & 14 & 14 & 17\\
        14 & 12 & 18 & 13 & 11 &  6 & 12\\
        19 &  8 & 16 & 18 & 12 & 19 & 19\\
        16 & 15 & 11 & 15 &  6 & 17 & 11
      \end{bmatrix}
    $$
    and an interval system $\I = ([2,3],\;[2,3],\;[2,3],\;[3,4],\;[3,4],\;[3,3],\;[3,4])$.
    An illustration of both can be found in Fig.~\ref{fig:big_example}.

    We obtain the following $\ILP(\I)$, where
    $x_1 \in [14,16]$,
    $x_2 \in [8,12]$,
    $x_3 \in [11,16]$,
    $x_4 \in [15,17]$,
    $x_5 \in [12,14]$,
    $x_7 \in [12,17]$
    and the sequence $\x$ can be retrieved as $\x=(x_1,x_2,x_3,x_4,x_5,14,x_7)$:
  
  {\small
    $$
    \begin{matrix}
                 &       &        &       &        &       &        &       &\min z  &       &   &       &        & \\
          20-x_1 & \ \ + & 18-x_2 & \ \ + & 20-x_3 & \ \ + & x_4-10 & \ \ + & 16-x_5 & \ \ + & 6 & \ \ + & x_7-10 & \;\;\le \;z\\
          x_1-11 & \ \ + & x_2-6 & \ \ + & x_3-7 & \ \ + & 17-x_4 & \ \ + & 14-x_5 & \ \ + & 0 & \ \ + & 17-x_7 & \;\;\le \;z\\
          x_1-14 & \ \ + & 12-x_2 & \ \ + & 18-x_3 & \ \ + & x_4-13 & \ \ + & x_5-11 & \ \ + & 8 & \ \ + & x_7-12 & \;\;\le \;z\\
          19-x_1 & \ \ + & x_2-8 & \ \ + & 16-x_3 & \ \ + & 18-x_4 & \ \ + & x_5-12 & \ \ + & 5 & \ \ + & 19-x_7 & \;\;\le \;z\\
          16-x_1 & \ \ + & 15-x_2 & \ \ + & x_3-11 & \ \ + & x_4-15 & \ \ + & x_5-6 & \ \ + & 3 & \ \ + & x_7-11 & \;\;\le \;z
    \end{matrix}
    $$
  }

  \end{example}
     Note that $P=\ILP(\I)$ has the following special form, which we call \pmilp:
  \begin{align*}
  \min z\\
  d_i + \sum_{j} x_j e_{i,j} \le z\\
  x_j \in R_P(x_j)
  \end{align*}
  where $e_{i,j}=\pm 1$ and $R_P(x_j)=\{\ell_j,\ldots,r_j\}$ for integers $\ell_j\le r_j$.
  Whenever we refer to variables, it does not apply to $z$, which is of auxiliary character.
  Also, ``$x_{j}\in R_P(x_j)$'' are called variable ranges rather than constraints. 
  We say that $(e_{1,j},\ldots,e_{k,j})$  is a \emph{coefficient vector} of $x_j$ and denote it as $\E_P(x_j)$.
  If the program $P$ is apparent from the context, we omit the subscript.

      \paragraph{Simplification of ILP.}
  The following two facts are used to reduce the number of variables of a \pmilp.
  For $A,B\sub \mathbb{Z}$ we define $-A=\{-a : a\in A\}$ and  $A+B = \{a+b : a\in A, b\in B\}$.

    \begin{fact}\label{fct:negate}
    Let $P$ be a \pmilp. Let $P'$ be a program obtained from $P$ by replacing
    a variable $x_j$ with $-x_j$, i.e. setting $\E_{P'}(x_j)=-\E_P(x_j)$ and $R_{P'}(x_j) = -R_P(x_j)$.
    Then $\OPT(P)=\OPT(P')$.
  \end{fact}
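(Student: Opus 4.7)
The plan is to exhibit an explicit, involutive bijection between feasible assignments of $P$ and $P'$ that preserves the value of the auxiliary objective $z$. Given any $(x_1,\ldots,x_\ell,z)$ feasible for $P$, I would send it to the assignment that agrees with it on every coordinate except $x_j$, which is replaced by $-x_j$, and with $z$ unchanged; the map's inverse is itself.

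First I would check that the image lies in the feasible region of $P'$. The variable-range condition follows directly from the definition $R_{P'}(x_j)=-R_P(x_j)$: since $x_j\in R_P(x_j)$ we have $-x_j\in R_{P'}(x_j)$, and every other range is identical in the two programs. Next, for the $i$-th inequality constraint, the only coefficient that changes is the one of $x_j$, which goes from $e_{i,j}$ in $P$ to $-e_{i,j}$ in $P'$. Since the value assigned to that variable is simultaneously negated, the contributing term transforms as $(-x_j)(-e_{i,j}) = x_j e_{i,j}$; all other terms and the constant $d_i$ are untouched, so the left-hand sides of the $i$-th constraints in $P$ and $P'$ coincide and the inequality $\le z$ continues to hold.

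This already gives $\OPT(P')\le \OPT(P)$. Because the transformation is its own inverse, applying the same argument with the roles of $P$ and $P'$ interchanged yields the reverse inequality, and equality follows. The proof is essentially a change of variables, so I do not expect any real obstacle; the only point requiring care is to read off $R_{P'}$ correctly as the pointwise negation of $R_P$ (and not, for instance, as the interval obtained by swapping and negating the endpoints, which happens to agree here only because $R_P(x_j)$ is itself a set of consecutive integers).
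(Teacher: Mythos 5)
Your proof is correct; the paper states this fact without proof, treating it as immediate, and your change-of-variables argument (negating the assignment of $x_j$ so that the term $(-x_j)(-e_{i,j})=x_je_{i,j}$ is preserved, together with the involutive nature of the map) is exactly the intended justification.
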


  \begin{fact}\label{fct:join}
  Let $P$ be a \pmilp. Assume $E_P(x_j)=E_P(x_{j'})$ for $j\ne j'$. Let $P'$ be a program obtained from $P$ by removing the variable
   $x_{j'}$ and replacing $x_j$ with $x_{j}+x_{j'}$, i.e. setting $R_{P'}(x_j) = R_P(x_j)+R_{P}(x_{j'})$.
    Then $\OPT(P)=\OPT(P')$.
  \end{fact}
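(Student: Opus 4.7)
The plan is to verify that the substitution $y = x_j + x_{j'}$ yields a bijection (on the level of objective values) between feasible solutions of $P$ and feasible solutions of $P'$. The key observation is that since $\E_P(x_j)=\E_P(x_{j'})$, we have $e_{i,j}=e_{i,j'}$ for every $i$, hence in each constraint of $P$
\[
e_{i,j}\, x_j + e_{i,j'}\, x_{j'} \;=\; e_{i,j}\,(x_j+x_{j'}),
\]
so every constraint of $P$ depends on $x_j, x_{j'}$ only through the sum $y := x_j+x_{j'}$. This is exactly the coefficient structure of the constraint of $P'$ on the merged variable.

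Next I would show the inequality $\OPT(P') \le \OPT(P)$. Given any feasible assignment $(x_j)_j$ of $P$ with value $z$, define an assignment for $P'$ by keeping all variables other than $x_{j'}$ unchanged and setting the new variable (still called $x_j$) to $x_j + x_{j'}$. Its value lies in $R_P(x_j)+R_P(x_{j'})=R_{P'}(x_j)$ by definition, and all constraints are satisfied with the same value of $z$ by the identity above.

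For the reverse inequality $\OPT(P)\le\OPT(P')$, I would start from a feasible assignment of $P'$ with value $z$, in which the merged variable takes some integer value $y \in R_{P'}(x_j) = \{\ell_j+\ell_{j'},\ldots,r_j+r_{j'}\}$. The (mild) technical point is that one must split $y$ back into two integer summands lying in the original ranges; this is possible because $R_P(x_j)$ and $R_P(x_{j'})$ are intervals of consecutive integers, so their Minkowski sum is precisely the integer interval $\{\ell_j+\ell_{j'},\ldots,r_j+r_{j'}\}$, and one can for instance take $x_j := \max(\ell_j,\,y-r_{j'})$ and $x_{j'} := y - x_j$, verifying directly that $x_j \in R_P(x_j)$ and $x_{j'} \in R_P(x_{j'})$. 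Applying the identity in the opposite direction shows that all constraints of $P$ hold with the same $z$.

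The main (and in fact only) obstacle is this integer decomposition: the statement of the fact would fail if the ranges were arbitrary integer sets rather than intervals, since then $R_P(x_j)+R_P(x_{j'})$ could contain integers not expressible as valid sums. Because the definition of \pmilp{} restricts ranges to integer intervals $\{\ell_j,\ldots,r_j\}$, this obstacle is easily handled, and the two inequalities together give $\OPT(P)=\OPT(P')$.
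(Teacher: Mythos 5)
Your proposal is correct and follows essentially the same route as the paper's proof: merge by summation for $\OPT(P')\le\OPT(P)$, and split the merged value back into the two original interval ranges for the reverse inequality (the paper simply asserts the split is possible, while you supply an explicit and valid formula $x_j:=\max(\ell_j,\,y-r_{j'})$). No gaps.
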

  \begin{proof}
  	Let $(z,x_1,\ldots,x_n)$ be a feasible solution of $P$. Then setting $x_{j}:=x_{j}+x_{j'}$
  	and removing the variable $x_{j'}$ we obtain a feasible solution of $P'$. Therefore $\OPT(P')\le \OPT(P)$.
  	For the proof of the other inequality, take a feasible solution $(z,x_1,\ldots,x_n)$ (with $x_{j'}$ missing)
  	of $P'$. Note that $x_{j}\in R_{P'}(x_j)=R_P(x_j)+R_P(x_{j'})$. Therefore one can split
  	$x_j$ into $x_j + x_{j'}$ so that $x_j\in R_P(x_j)$ and $x_{j'}\in R_P(x_{j'})$. This way we obtain a feasible solution of $P$
  	and thus prove that $\OPT(P)\le \OPT(P')$.
  \end{proof}

  \begin{corollary}\label{cor:2_to_k}
    For a \pmilp with $k$ constraints one can compute in linear time an equivalent \pmilp
    with $k$ constraints and up to $2^{k-1}$ variables.
  \end{corollary}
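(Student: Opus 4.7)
The plan is to combine Facts~\ref{fct:negate} and \ref{fct:join} to collapse variables that share (up to sign) the same coefficient vector. Since each coefficient vector lies in $\{-1,+1\}^k$, there are only $2^k$ possibilities; and Fact~\ref{fct:negate} tells us that flipping the sign of a variable negates its coefficient vector without changing the optimum, so the vectors $\E$ and $-\E$ should be regarded as equivalent. This leaves $2^{k-1}$ equivalence classes, which will ultimately bound the number of variables.

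Concretely, I would process the variables as follows. First, scan the variables once; for each $x_j$, if the first coordinate of $\E_P(x_j)$ is $-1$, apply Fact~\ref{fct:negate} to replace $x_j$ by $-x_j$. After this pass every coefficient vector begins with $+1$, so it is determined by its remaining $k-1$ coordinates, giving at most $2^{k-1}$ distinct vectors. Second, bucket the variables by their (now canonical) coefficient vectors; the vector can be read as a $(k-1)$-bit integer, so bucketing takes $O(k)$ time per variable. Third, within each nonempty bucket, repeatedly apply Fact~\ref{fct:join} to merge all variables of that bucket into a single variable whose range is the Minkowski sum of the original ranges. The Minkowski sum of ranges $\{\ell_1,\ldots,r_1\}+\cdots+\{\ell_t,\ldots,r_t\}$ is simply $\{\ell_1+\cdots+\ell_t,\ldots,r_1+\cdots+r_t\}$, so it is computed in $O(t)$ time per bucket by accumulating endpoints.

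The resulting program has the same $k$ constraints, at most $2^{k-1}$ variables (one per nonempty bucket), and, by Facts~\ref{fct:negate} and \ref{fct:join}, the same optimum as $P$. The total work is $O(kn)$ where $n$ is the original number of variables: $O(k)$ per variable for the sign normalization and bucket-key computation, plus $O(1)$ per variable for accumulating the bucket's endpoints. Under the paper's assumption that $k$ fits in a machine word (and treating $k$ as a constant when we say ``linear''), this is linear in the input size.

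There is no real obstacle here: the only point that deserves care is verifying that the range arithmetic of Fact~\ref{fct:join} stays as a single contiguous interval of integers when iterated, which is immediate since the Minkowski sum of contiguous integer intervals is again a contiguous integer interval. Thus the corollary follows by a single linear-time pass that normalizes signs, buckets by coefficient vector, and merges within buckets.
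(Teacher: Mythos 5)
Your proposal is correct and follows essentially the same route as the paper's own proof: normalize each variable via Fact~\ref{fct:negate} so that the first coefficient is $+1$, leaving at most $2^{k-1}$ distinct coefficient vectors, then merge variables sharing a vector via Fact~\ref{fct:join}. The extra detail you supply about bucketing and accumulating interval endpoints is a reasonable elaboration of the linear-time claim, not a different argument.
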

  \begin{proof}
  We apply Fact~\ref{fct:negate} to obtain $e_{1,1}=e_{1,2}=\ldots=e_{1,\ell}$,
  this leaves at most $2^{k-1}$ different coefficient vectors.
  Afterwards we apply Fact~\ref{fct:join} as many times as possible so that there is
  exactly one variable with each coefficient vector.
\end{proof}

  \begin{example}
    Consider the \pmilp $P$ from Example~\ref{ex:big_example}.
    Observe that $\E_P(x_4) = \E_P(x_7) = -\E_P(x_2)$ and thus Facts~\ref{fct:negate} and~\ref{fct:join}
    let us merge $x_2$ and $x_7$ into $x_4$ with
    $$R_{P'}(x_4) = R_P(x_4) + R_P(x_7) - R_P(x_2) = [15,17] + [12,17] + [-12,-8] = [15,26].$$
    Simplifying the constant terms we obtain the following \pmilp $P'$:

    {\small
      $$
    \begin{matrix}
      &&&&&&&&&\min z\\
-x_1  & - & x_3 & + & x_4 & - & x_5 & + & 60 & \le z\\
+x_1  & + & x_3 & - & x_4 & - & x_5 & + & 24 & \le z\\
+x_1  & - & x_3 & + & x_4 & + & x_5 & - & 12 & \le z\\
-x_1 & - & x_3 & - & x_4 & + & x_5 & + & 57 & \le z\\
-x_1 & + & x_3 & + & x_4 & + & x_5 & - & 9  & \le z
    \end{matrix}
    $$
  }

  \end{example}

\section{Kernel of MSC for Arbitrary $k$}\label{sec:ker}
In this section we give a kernel for the MSC problem parameterized with $k$,
which we then apply to develop a linear-time FPT algorithm.
To obtain the kernel we need a combinatorial observation that if $\pi_{j} = \pi_{j'}$
then the $j$-th and the $j'$-th column in $\A$ can be merged.
This is stated formally in the following lemma.

\begin{lemma}\label{lem:kernel_join}
  Let $\A = (\a_1,\ldots,\a_k)$ be a collection of sequences of length $\ell$
  and assume that $\pi_{j} = \pi_{j'}$ for some $1 \le j < j' \le \ell$.
  Let $\A' = (\a'_1,\ldots,\a'_k)$ be a collection of sequences of length $\ell-1$
  obtained from $\A$ by removing the $j'$-th column and setting
  $a'_{i,j} = a_{i,j}+a_{i,j'}$.
  Then $\OPT(\A) = \OPT(\A')$.
\end{lemma}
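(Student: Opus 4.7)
The plan is to establish the two inequalities $\OPT(\A') \le \OPT(\A)$ and $\OPT(\A) \le \OPT(\A')$ separately. The direction $\OPT(\A') \le \OPT(\A)$ does not use the assumption on the permutations: starting from any consensus $\x$ for $\A$, I would form $\x'$ of length $\ell-1$ by setting $x'_j := x_j + x_{j'}$ and dropping coordinate $j'$. For each row $i$ the triangle inequality $|x_j + x_{j'} - a_{i,j} - a_{i,j'}| \le |x_j - a_{i,j}| + |x_{j'} - a_{i,j'}|$ immediately yields $\dist(\x',\a'_i) \le \dist(\x,\a_i)$, hence $\OPT(\A') \le \OPT(\A)$.

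The reverse inequality is where the hypothesis $\pi_j = \pi_{j'}$ does the work, and it is the more substantial direction. The first step is a short combinatorial observation: since both columns share the same sorting permutation, the sequence $a'_{\pi_j(t),j} = s_{t,j}+s_{t,j'}$ is already non-decreasing in $t$, so the sorted column of $\A'$ at the merged position is $s'_t = s_{t,j}+s_{t,j'}$. By Observation~\ref{obs:bet} applied to $\A'$, an optimal $x'_j$ lies in $[s_{1,j}+s_{1,j'},\, s_{k,j}+s_{k,j'}]$, and in particular belongs to some tile $[s_{t,j}+s_{t,j'},\, s_{t+1,j}+s_{t+1,j'}]$.

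Given such a $t$, I would split $x'_j = x_j + x_{j'}$ with $x_j \in [s_{t,j}, s_{t+1,j}]$ and $x_{j'} \in [s_{t,j'}, s_{t+1,j'}]$; such an integer split always exists because the Minkowski sum of two integer intervals is the integer interval obtained by adding endpoints. The point of this choice is that for every row $i$ the differences $x_j - a_{i,j}$ and $x_{j'} - a_{i,j'}$ now have matching signs: letting $r$ be the rank of $i$ in the common permutation $\pi_j = \pi_{j'}$, both differences are $\ge 0$ when $r \le t$ and both are $\le 0$ when $r \ge t+1$. Consequently $|x_j - a_{i,j}| + |x_{j'} - a_{i,j'}| = |x'_j - a'_{i,j}|$ holds row by row, and extending the split to a full $\x$ (by keeping the remaining coordinates as in $\x'$) yields $\dist(\x,\a_i) = \dist(\x',\a'_i)$ for every $i$, giving $\OPT(\A) \le \OPT(\A')$.

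The main obstacle is the sign-pattern bookkeeping in the final step; it is the only place where the hypothesis $\pi_j = \pi_{j'}$ is invoked, and it is exactly what makes the triangle inequality tight simultaneously in all $k$ rows. Without the shared permutation, some row would have $a_{i,j}$ and $a_{i,j'}$ on opposite sides of the chosen split, the triangle inequality would be strict for that row, and the reverse construction would fail.
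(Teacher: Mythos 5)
Your proposal is correct and follows essentially the same route as the paper's proof: the triangle inequality for $\OPT(\A')\le\OPT(\A)$, and for the converse the observation that $\pi_j=\pi_{j'}$ makes the sorted merged column the entrywise sum $s_{t,j}+s_{t,j'}$, followed by splitting $x'_j$ within a single tile so that both parts land in the corresponding tiles of columns $j$ and $j'$. Your explicit sign-matching argument is precisely the justification the paper leaves implicit in its claim that $\dist(\x,\a_m)=\dist(\x',\a_m)$, so nothing is missing.
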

\begin{proof}
	First, let us show that $\OPT(\A')\le \OPT(\A)$.
	Let $\x$ be a Manhattan consensus sequence for $\A$
  and let $\x'$ be obtained from $\x$ by removing the $j'$-th entry and setting $x'_j=x_{j}+x_{j'}$.
	We claim that $\dist(\x', \A')\le \dist(\x, \A)$.
	Note that it suffices to show that $|x'_j-a'_{i,j}|\le |x_j-a_{i,j}|+|x_{j'}-a_{i,j'}|$ for all $i$.
	However, with $x'_j = x_{j}+x_{j'}$ and $a'_{i,j}=a_{i,j}+a_{i,j'}$, this is a direct consequence
	of the triangle inequality.

	It remains to prove that $\OPT(\A)\le \OPT(\A')$.
	Let $\x'$ be a Manhattan consensus sequence for $\A'$.
	By Observation~\ref{obs:bet}, $x'_{j}$ is consistent with some proper basic interval $[i,i+1]$.
	Let $d'_{i,j}= x'_j - s'_{i,j}$ and $D'_{i,j}=s'_{i+1,j}-s'_{i,j}$.
	Also, let $D_{i,j}=s_{i+1,j}-s_{i,j}$ and $D_{i,j'}=s_{i+1,j'}-s_{i,j'}$.
  Note that, since $\pi_j = \pi_{j'}$, $D'_{i,j}= D_{i,j} + D_{i,j'}$.
  Thus, one can partition $d'_{i,j}=d_{i,j}+d_{i,j'}$
	so that both $d_{i,j}$ and $d_{i,j'}$ are non-negative integers not exceeding $D_{i,j}$ and $D_{i,j'}$ respectively.
	We set $x_j = s_{i,j}+d_{i,j}$ and $x_{j'}=s_{i,j'}+d_{i,j'}$, and the remaining components of $\x$
  correspond to components of $\x'$.
	Note that $x'_j = x_j+x_{j'}$ and that both $x_j$ and $x_{j'}$ are consistent with $[i,i+1]$.
	Consequently for any sequence $\a_m$ it holds that $\dist(\x,\a_m)=\dist(\x',\a_m)$ and therefore $\dist(\x,\A)=\dist(\x',\A')$,
	which concludes the proof.

	Finally, note that the procedures given above can be used to efficiently convert between the optimum solutions $\x$ and $\x'$.
\end{proof}

\noindent
By Lemma~\ref{lem:kernel_join}, to obtain the desired kernel we need to sort the elements in columns of $\A$
and afterwards sort the resulting permutations $\pi_j$:

\begin{theorem}\label{thm:kernel}
  In $\Oh(\ell k\log k)$ time one can reduce any instance of MSC to
  an instance with $k$ sequences of length $\ell'$, with $\ell' \le k!$.
\end{theorem}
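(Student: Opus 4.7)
The plan is to combine a per-column sort (which reveals each $\pi_j$) with a global sort of the columns themselves (which groups together columns sharing the same permutation), and then apply Lemma~\ref{lem:kernel_join} greedily to collapse each group into a single column. Since there are only $k!$ distinct permutations of $\{1,\dots,k\}$, the number of surviving columns after exhaustive merging is at most $k!$, which will give the desired bound $\ell'\le k!$.

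First I would, for each column $j\in\{1,\dots,\ell\}$, sort the $k$ entries $a_{1,j},\dots,a_{k,j}$ to obtain the permutation $\pi_j$ (breaking ties arbitrarily but consistently). This takes $\Oh(k\log k)$ time per column and $\Oh(\ell k\log k)$ in total, matching the time bound claimed in the theorem. Next I would group the $\ell$ columns according to the value of $\pi_j$. A convenient way to do this within the allowed budget is to treat each $\pi_j$ as a length-$k$ string over the alphabet $\{1,\dots,k\}$ and radix-sort these strings in $\Oh(k(\ell+k))$ time, or alternatively to apply any comparison sort with $\Oh(k)$-time comparisons for a total of $\Oh(\ell k\log \ell)$ time; in the only regime where the kernel is nontrivial ($\ell > k!$) we have $\log \ell = \Oh(k\log k)$, so either variant is absorbed into the $\Oh(\ell k \log k)$ bound.

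Once the columns are partitioned into classes of equal permutation, I would process each class iteratively: as long as a class contains two columns $j,j'$, invoke Lemma~\ref{lem:kernel_join} to replace them by a single column whose $i$-th entry is $a_{i,j}+a_{i,j'}$, decreasing the class size by one. Each such merge costs $\Oh(k)$ time and strictly reduces $\ell$, so the total work across all merges is $\Oh(\ell k)$. Crucially, Lemma~\ref{lem:kernel_join} guarantees that the optimum $\OPT(\A)$ is preserved throughout. When the procedure terminates, no two remaining columns share the same permutation, so the number $\ell'$ of remaining columns is at most the number of permutations of $\{1,\dots,k\}$, namely $k!$.

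I expect the only subtle point to be the cost of the grouping step, since the naive way of comparing permutations of length $k$ can blow up the running time; the remedy above (radix sort, or the observation that $\log\ell=\Oh(k\log k)$ in the interesting regime) handles this cleanly. The rest is routine: the initial column sorts provide the $\pi_j$, Lemma~\ref{lem:kernel_join} does all the combinatorial work of justifying the merges, and pigeonhole on the set of permutations supplies the $k!$ bound.
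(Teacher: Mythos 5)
Your proposal matches the paper's argument: the paper likewise sorts each column to obtain $\pi_j$, then sorts/groups the columns by their permutations, and merges equal-permutation columns via Lemma~\ref{lem:kernel_join}, with the $k!$ bound following by pigeonhole. Your extra care about the cost of grouping the permutations (radix sort, or absorbing $\log\ell$ into $\Oh(k\log k)$ in the nontrivial regime) only makes explicit a detail the paper leaves implicit, so the two proofs are essentially identical.
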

\begin{remark}
For binary instances, if permutations $\pi_j$ are chosen appropriately,
 it holds that $\ell'\le 2^k$.
\end{remark}

\begin{theorem}\label{thm:impractical}
 For any integer $k$, the \textsc{Manhattan Sequence Consensus} problem can be solved in $\Oh(\ell k\log k+ 2^{k!\log k+ O(k2^k)}\log M)$ time.
\end{theorem}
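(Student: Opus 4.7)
The plan is to combine the kernelization of Theorem~\ref{thm:kernel} with an exhaustive enumeration of interval systems, each of which reduces via Corollary~\ref{cor:2_to_k} to a low-dimensional \pmilp solvable by Lenstra's algorithm. First I would apply Theorem~\ref{thm:kernel} to replace the input by an equivalent instance of length $\ell' \le k!$ in $\Oh(\ell k \log k)$ time; from here on the effective length depends only on $k$.

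Next I would enumerate candidate interval systems. By Observation~\ref{obs:bet}, any Manhattan consensus sequence satisfies $s_{1,j} \le x_j \le s_{k,j}$, so each $x_j$ lies in some \emph{proper} basic interval $[i,i+1]$ (values $x_j = s_{i,j}$ at a boundary can be assigned to either of the two adjacent proper intervals arbitrarily). It therefore suffices to consider interval systems built only from proper intervals; there are at most $(k-1)^{\ell'} \le (k-1)^{k!} \le 2^{k! \log k}$ of them, and $\OPT(\A) = \min_{\I} \OPT(\A,\I)$ over these systems.

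For each such $\I$ I would build $\ILP(\I)$ and invoke Corollary~\ref{cor:2_to_k} to reduce it to a \pmilp with $k$ constraints and $n \le 2^{k-1}$ variables. Lenstra's algorithm solves the resulting ILP in $n^{\Oh(n)} \log M = 2^{\Oh(n \log n)} \log M = 2^{\Oh(k 2^k)} \log M$ time. Multiplying by the number of interval systems and adding the kernelization cost yields $\Oh(\ell k \log k + 2^{k! \log k + \Oh(k 2^k)} \log M)$, matching the claimed bound.

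The key subtlety is the restriction to \emph{proper} intervals: allowing degenerate intervals as well would raise the count to $(2k-1)^{k!} = 2^{k! \log k + \Theta(k!)}$, and since $k! \gg k 2^k$ for large $k$, the resulting $\Theta(k!)$ additive term in the exponent would exceed the allowed $\Oh(k 2^k)$. Observation~\ref{obs:bet}, together with the trivial fact that every integer $x_j \in [s_{1,j},s_{k,j}]$ belongs to some range $\{s_{i,j},\ldots,s_{i+1,j}\}$, exactly justifies discarding degenerate intervals, after which the remaining analysis is a routine multiplication of the $\Oh(2^k)$-variable Lenstra cost by the number of interval systems.
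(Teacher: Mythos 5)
Your proposal is correct and follows essentially the same route as the paper: kernelize to length at most $k!$, enumerate the $(k-1)^{k!}$ interval systems of proper intervals (justified by Observation~\ref{obs:bet}), reduce each $\ILP(\I)$ to at most $2^{k-1}$ variables via Corollary~\ref{cor:2_to_k}, and apply Lenstra's algorithm. Your extra remark explaining why degenerate intervals must be excluded from the enumeration is a valid clarification but does not change the argument.
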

\begin{proof}
  We solve the kernel from Theorem~\ref{thm:kernel} by considering all possible interval systems $\I$
  composed of proper intervals.
  The sequences in the kernel have length at most $k!$, which gives $(k-1)^{k!}$
  {\pmilp}s of the form $\ILP(\I)$ to solve.

  Each of the {\pmilp}s initially has $k$ constraints on $k!$ variables but, due to
  Corollary~\ref{cor:2_to_k}, the number of variables can be reduced to $2^{k-1}$.
  Lenstra's algorithm with further improvements \cite{kannan,franktardos,lokshtanov}
  solves ILP with $p$ variables in $\Oh(p^{2.5p+o(p)}\log L)$ time, where $L$ is the bound on
  the scope of variables.
  In our case $L=\Oh(\ell M)$, which  gives the time complexity of:
  $$\Oh\left((k-1)^{k!} \cdot 2^{(k-1)(2.5\cdot2^{k-1}+o(2^{k-1}))}\log M\right) = \Oh\left(2^{k!\log k + O(k2^{k})}\log M\right).$$
  This concludes the proof of the theorem.
\end{proof}

\section{Combinatorial Characterization of Solutions for $k = 5$}\label{sec:comb5}
  In this section we characterize
  those Manhattan consensus sequences $\x$ which additionally
  minimize $\sum_{i} \dist(\x,\a_i)$ among all Manhattan consensus sequences.
  Such sequences are called here \emph{sum-MSC sequences}.
  We show that one can determine a collection of 20 interval systems,
  so that any sum-MSC sequence is guaranteed to be consistent with one of them.
  We also prove some structural properties of these systems, which are then
  useful to efficiently solve the corresponding {\pmilp}s.

	We say that $x_j$ is in the \emph{center} if $x_j=s_{3,j}$, i.e. $x_j$ is equal to the column median.
	Note that if $x_j \ne s_{3,j}$, then moving $x_j$ by one towards the center
	decreases by one $\dist(\x,\a_i)$ for at least three sequences $\a_i$.

	\begin{definition}
	We say that $\a_i$ \emph{governs} $x_j$ if $x_j$ is in the center or moving $x_j$ towards the center
	increases $\dist(\x,\a_i)$. The set of indices $i$ such that $\a_i$ governs $x_j$ is denoted
	as $G_j(\x)$.
	\end{definition}
	Observe that if $x_j$ is in the center, then $|G_j(\x)|=5$, and otherwise
	$|G_j(\x)|\le 2$; see Fig.~\ref{fig:G}.
  For $k=5$ we have 4 proper basic intervals: $[1,2]$, $[2,3]$, $[3,4]$ and $[4,5]$.
  We call $[1,2]$ and $[4,5]$ \emph{border} intervals, and the other two \emph{middle intervals}.
  We define $G_j([1,2])=\{\pi_j(1)\}$, $G_j([2,3])=\{\pi_j(1),\pi_j(2)\}$, $G_j([3,4])=\{\pi_j(4),\pi_j(5)\}$
  and $G_j([4,5])=\{\pi_j(5)\}$.
  Note that if we know $G_j(\x)$ and $|G_j(\x)|\le 2$, then we are guaranteed that $x_j$ is consistent
  with the basic interval $I_j$ for which $G_j(I_j)=G_j(\x)$.

  Observe that if $\x$ is a Manhattan consensus sequence,
  then $G_j(\x)\ne \emptyset$ for any $j$.
  If we additionally assume that $\x$ is a sum-MSC sequence, we obtain a stronger property.

\begin{figure}[htpb]
  \begin{center}
    \begin{tikzpicture}[xscale=1.5]
      \draw (0,0) -- (6,0);
      \foreach \x in {1,...,3} \draw[xshift=-0.03cm] (\x,0) node {)};
      \foreach \x in {1,...,2} \draw[xshift=0.03cm] (\x,0) node {[};
      \foreach \x in {4,...,5} \draw[xshift=-0.03cm] (\x,0) node {]};
      \foreach \x in {3,...,5} \draw[xshift=0.03cm] (\x,0) node {(};
      \foreach \x in {1,...,5} \draw[above=0.2cm] (\x,0) node {\x};
      \draw (-0.2,0) node[left] {$x_j$:};

      \draw (0.5,-0.7) node {$\emptyset$};
      \draw (1.5,-0.7) node {$\{3\}$};
      \draw (2.5,-0.7) node {$\{1,3\}$};
      \draw (3.5,-0.7) node {$\{2,5\}$};
      \draw (4.5,-0.7) node {$\{5\}$};
      \draw (5.5,-0.7) node {$\emptyset$};
      \draw (3,-1.3) node (A) {$\{1,\ldots,5\}$};
      \draw[-latex] (A) -- (3,-0.2);
      \draw (-0.2,-0.7) node[left] {$G_j(\x)$:};
    \end{tikzpicture}
  \end{center}
  \caption{\label{fig:G}
    Assume $a_{1,j}=2, a_{2,j}=4, a_{3,j}=1, a_{4,j}=3,a_{5,j}=5$.
    Then $G_j(x)$ depends on the interval of $x_j$ as shown in the figure.
    E.g., if $1 \le x_j < 2$ then $G_j(\x)=\{3\}$.
  }
\end{figure}
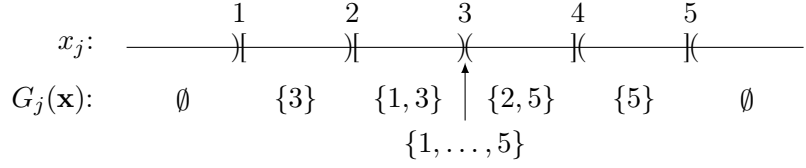


 \begin{lemma}\label{lem:com}
  Let $\x$ be a sum-MSC sequence.
 	Then $G_j(\x)\cap G_{j'}(\x)\ne \emptyset$ for any $j,j'$.
  \end{lemma}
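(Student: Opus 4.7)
The plan is to argue by contradiction: assume $\x$ is a sum-MSC sequence but $G_j(\x) \cap G_{j'}(\x) = \emptyset$ for some positions $j \ne j'$, and then construct a Manhattan consensus sequence $\x'$ whose sum of distances is strictly smaller, which is the contradiction we need (the case $j = j'$ is immediate, since $G_j(\x) \ne \emptyset$ for a consensus sequence).

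First I would rule out the possibility that either $x_j$ or $x_{j'}$ lies in the center. If, say, $x_j = s_{3,j}$, then $G_j(\x) = \{1,\ldots,5\}$, so disjointness would force $G_{j'}(\x) = \emptyset$, contradicting the fact, observed just before the lemma, that every coordinate of a Manhattan consensus sequence has a nonempty governor set. Hence both $x_j$ and $x_{j'}$ are off-center, so $|G_j(\x)|, |G_{j'}(\x)| \le 2$, which gives $|G_j(\x) \cup G_{j'}(\x)| \le 4 < 5$; in particular there exists at least one index $i^{*} \notin G_j(\x) \cup G_{j'}(\x)$.

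The main step is to define $\x'$ by shifting both $x_j$ and $x_{j'}$ by one unit towards the center and leaving every other coordinate fixed, then to tabulate the effect on each $\dist(\x, \a_i)$. For $i \in G_j(\x)$, the shift at coordinate $j$ increases the distance by $1$ (by definition of governance), while the shift at coordinate $j'$ decreases it by $1$ since $i \notin G_{j'}(\x)$; these contributions cancel. The symmetric calculation for $i \in G_{j'}(\x)$ again produces net change $0$. For $i \notin G_j(\x) \cup G_{j'}(\x)$ each shift saves $1$, yielding a net change of $-2$. Consequently $\dist(\x', \a_i) \le \dist(\x, \a_i)$ for every $i$, with strict inequality at $i^{*}$, so $\x'$ is still a Manhattan consensus sequence while $\sum_i \dist(\x', \a_i) < \sum_i \dist(\x, \a_i)$, contradicting the sum-minimality of $\x$.

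The one subtle point worth carefully verifying is why $i \notin G_{j'}(\x)$ really delivers a strict $-1$ contribution at coordinate $j'$, rather than $0$ or $+1$. This reduces to ruling out the equality $a_{i,j'} = x_{j'}$: at an exact match a one-step shift in any direction sends $0$ to $1$, which counts as an increase and would place $i$ in $G_{j'}(\x)$; hence the assumption $i \notin G_{j'}(\x)$ forces $a_{i,j'}$ to lie strictly on the center-side of $x_{j'}$, making the $-1$ exact. With this check in place the remaining bookkeeping is routine, and the contradiction with $\x$ being sum-MSC is immediate.
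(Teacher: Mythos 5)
Your proof is correct and follows essentially the same argument as the paper's: assume disjointness, note both coordinates are off-center, shift both toward the center by one, and observe that the distances to governing sequences are unchanged while those to the at least one remaining sequence drop by two, contradicting sum-minimality. Your extra verification that $i \notin G_{j'}(\x)$ forces an exact $-1$ at coordinate $j'$ (ruling out $a_{i,j'}=x_{j'}$) is a detail the paper leaves implicit, but it does not change the approach.
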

  \begin{proof}
  For a proof by contradiction assume $G_j(\x)$ and $G_{j'}(\x)$ are disjoint.
  This implies that neither $x_j$ nor $x_{j'}$ is in the center and thus $|G_j(\x)|,|G_{j'}(\x)|\le 2$.
  Let us move both $x_j$ and $x_{j'}$ by one towards the center.
  Then $\dist(\x,\a_i)$ remains unchanged for $i\in G_j(\x)\cup G_{j'}(\x)$ (by disjointness),
  and decreases by two for the remaining sequences $\a_i$. There must be at least one
  such remaining sequence, which contradicts our choice of $\x$.
  \end{proof}
  Additionally, if a sum-MSC sequence $\x$ has a position $j$ with $|G_j(\x)|=1$,
  the structure of $\x$ needs to be even more regular.
  \begin{definition}
A sequence $\x$ is called an $i$-\emph{border sequence} if for each $j$ it holds that $x_j = a_{i,j}$ or $G_j(\x)=\{i\}$.
\end{definition}
  \begin{lemma}\label{lem:one}
  Let $\x$ be a sum-MSC sequence. If $G_j(\x)=\{i\}$ for some $j$, then $\x$ is an $i$-border sequence.
  \end{lemma}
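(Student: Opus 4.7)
The plan is to argue by contradiction. Suppose $\x$ is a sum-MSC sequence with $G_j(\x)=\{i\}$ for some $j$ but $\x$ is not an $i$-border sequence; then there exists a position $j'$ with $x_{j'}\ne a_{i,j'}$ and $G_{j'}(\x)\ne\{i\}$. Applying Lemma~\ref{lem:com} to the pair $(j,j')$ gives $G_j(\x)\cap G_{j'}(\x)\ne\emptyset$, and combining with $G_j(\x)=\{i\}$ forces $i\in G_{j'}(\x)$ and $|G_{j'}(\x)|\ge 2$.

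The central construction is a coupled two-coordinate unit move producing $\x'$: shift $x_j$ one unit toward the column median $s_{3,j}$ (well defined because $|G_j(\x)|=1$ rules out $x_j=s_{3,j}$), and simultaneously shift $x_{j'}$ one unit toward $a_{i,j'}$. At coordinate $j$ the first shift contributes $+1$ to $\dist(\x,\a_i)$ and $-1$ to $\dist(\x,\a_m)$ for every $m\ne i$, since $\a_i$ is the sole governor. At coordinate $j'$ the second shift contributes $-1$ to $\dist(\x,\a_i)$ (because it approaches $a_{i,j'}$) and $\pm 1$ to each other distance, the sign depending on whether $a_{m,j'}$ lies on the same side of $x_{j'}$ as $a_{i,j'}$. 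Adding the two contributions per sequence, $\dist(\x',\a_i)=\dist(\x,\a_i)$ and $\dist(\x',\a_m)\in\{\dist(\x,\a_m),\dist(\x,\a_m)-2\}$ for $m\ne i$, so $\max_m\dist(\x',\a_m)\le\OPT$ and $\x'$ remains a Manhattan consensus.

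For the sum, the $j$-shift contributes $-3$ and the $j'$-shift contributes $5-2\nu$, where $\nu$ is the number of sequences $\a_m$ whose coordinate $a_{m,j'}$ lies strictly on the same side of $x_{j'}$ as $a_{i,j'}$. A short case analysis over the admissible shapes of $G_{j'}(\x)$---either $\{i,i'\}$ for some $i'$ in a middle basic interval, or $\{1,\ldots,5\}$ at the median---together with $x_{j'}\ne a_{i,j'}$ establishes $\nu\ge 2$ whenever the coordinate of the secondary element of $G_{j'}(\x)$ differs from $x_{j'}$. In all such generic cases the total sum change is at most $-2$, contradicting the sum-minimality of $\x$.

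The main obstacle is the remaining degenerate subcase $\nu=1$, in which $\a_i$ is the unique sequence strictly separated from $x_{j'}$ on its side and the secondary governor satisfies $a_{m,j'}=x_{j'}$ (which forces $x_{j'}\in\{s_{2,j'},s_{4,j'}\}$ and $s_{1,j'}<s_{2,j'}$ or symmetrically $s_{4,j'}<s_{5,j'}$). The proposed modification then preserves both the maximum and the sum, but a direct verification shows $G_{j'}(\x')=\{i\}$, so the violation at $j'$ has been absorbed into $\x'$. The cleanest way to close the argument is to package this distance-preserving exchange as a reduction to a canonical sum-MSC representative (e.g.\ lexicographically minimal), under which such an exchange is forbidden and therefore the degenerate configuration cannot arise for $\x$ at all.
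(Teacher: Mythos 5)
Your overall strategy is identical to the paper's: pick a violating position $j'$, use Lemma~\ref{lem:com} to conclude $i\in G_{j'}(\x)$, and apply the coupled unit move (shift $x_j$ toward the center and $x_{j'}$ toward $a_{i,j'}$) to contradict sum-minimality. Your accounting ($-3$ at position $j$, $5-2\nu$ at position $j'$) is correct, and in the generic case $\nu\ge 2$ it reproduces the paper's one-line conclusion that some $\dist(\x,\a_{i'})$ drops by two while no distance grows. You also correctly notice that the argument breaks when the only sequence strictly on the $a_{i,j'}$ side of $x_{j'}$ is $\a_i$ itself, i.e.\ when every other governor $i'$ satisfies $a_{i',j'}=x_{j'}$; the paper's proof silently assumes this cannot happen (its phrase that $x_{j'}$ is ``moved towards some $a_{i',j'}$'' fails in that tie case), so isolating it is to your credit.

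However, your treatment of that degenerate case is a gap, not a proof. Passing to a canonical (e.g.\ lexicographically minimal) sum-MSC representative would at best show that \emph{some} sum-MSC sequence is an $i$-border sequence, whereas the lemma asserts the property for \emph{every} sum-MSC sequence with $G_j(\x)=\{i\}$, and Lemma~\ref{lem:char} is applied to an arbitrary such sequence; you cannot establish a universally quantified statement by exhibiting one good representative. Worse, the tie case is not vacuous and in fact falsifies the literal statement, so no polishing of the exchange can close your last step: take $\a_1=(0,0)$, $\a_2=(2,2)$, $\a_3=\a_4=\a_5=(2,4)$ and $\x=(1,2)$. Then $\OPT(\A)=3$, the only consensus sequences are $(0,3),(1,2),(2,1)$ with sums $15,13,13$, so $\x$ is a sum-MSC sequence; it has $G_1(\x)=\{1\}$, yet $x_2=2\ne a_{1,2}$ and $G_2(\x)=\{1,2\}$, so $\x$ is not a $1$-border sequence (it is still consistent with the interval system $\B_1$, which is why the algorithm itself survives). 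To finish honestly you must either weaken the conclusion to ``$\x$ is consistent with $\B_i$'' --- which your exchange argument, carried out carefully including the tie case, does support --- or restate the lemma existentially; as written, the final paragraph of your proof is a declared intention rather than an argument.
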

	\begin{proof}
	  For a proof by contradiction assume $G_{j'}(\x)\ne \{i\}$ and $x_{j'}\ne a_{i,j'}$ for some $j'$.
	  Let us move $x_j$ towards the center and $x_{j'}$ towards $a_{i,j'}$ both by one.
	  Then for any $i'$ it holds that $\dist(\x,\a_{i'})$ does not increase.
	  By Lemma~\ref{lem:com} $i\in G_{j'}(\x)$, so $x_{j'}$ is moved away from the center.
	  Moreover, $x_j$ is moved towards some $a_{i',j}$ with $i'\ne i$, since $G_{j'}(\x)\ne \{i\}$.
	  Consequently, $\dist(\x,\a_{i'})$ decreases by two, which
	    contradicts our choice of $\x$. 
	\end{proof}

	\begin{definition}
	A sequence $\x$ is called an $i$-\emph{middle sequence} if for each $j$ it holds that $i\in G_j(\x)$ and $|G_j(\x)|\ge 2$.
	\end{definition}
	\begin{definition}
	For a 3-element set $\Delta\subseteq \{1,\ldots,5\}$ a sequence $\x$ is called a $\Delta$-\emph{triangle sequence}
	if for each $j$ it holds that $|\Delta\cap G_j(\x)|\ge 2$.
	\end{definition}

	\begin{lemma}\label{lem:char}
	Let $\x$ be a sum-MSC sequence.
	Then $\x$ is a border sequence, a middle sequence or a triangle sequence.
	\end{lemma}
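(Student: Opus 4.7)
The plan is to case-split on whether some position $j$ has $|G_j(\x)|=1$. If such a $j$ exists then, letting $i$ be the unique element of $G_j(\x)$, Lemma~\ref{lem:one} immediately yields that $\x$ is an $i$-border sequence, and we are done. In the complementary case $|G_j(\x)|\ge 2$ for every $j$; recalling that $|G_j(\x)|\in\{1,2,5\}$, each $G_j(\x)$ is then either a $2$-subset of $\{1,\ldots,5\}$ or the whole set $\{1,\ldots,5\}$ (the latter exactly when $x_j=s_{3,j}$).

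Next I would observe that positions with $G_j(\x)=\{1,\ldots,5\}$ trivially satisfy both $i\in G_j(\x)$ for every $i$ and $|\Delta\cap G_j(\x)|=3$ for every $3$-set $\Delta$, so they impose no constraint on either target definition. Hence it suffices to focus on the family $\mathcal{F}$ of $2$-element values of $G_j(\x)$. By Lemma~\ref{lem:com}, the sets in $\mathcal{F}$ pairwise intersect. The lemma then reduces to the following purely combinatorial claim: \emph{any family $\mathcal{F}$ of pairwise intersecting $2$-subsets of $\{1,\ldots,5\}$ either admits a common element $i$ (yielding an $i$-middle sequence) or is contained in some $3$-element set $\Delta$ (yielding a $\Delta$-triangle sequence)}. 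The degenerate case $\mathcal{F}=\emptyset$ is trivial (any $i$ works).

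The combinatorial claim itself I would prove by elementary case analysis, viewing $\mathcal{F}$ as a graph on vertex set $\{1,\ldots,5\}$. If $\mathcal{F}$ has one edge the claim is immediate, so fix two distinct edges $\{a,b\},\{a,c\}\in\mathcal{F}$ sharing some vertex $a$. Any further edge $e=\{u,v\}\in\mathcal{F}$ must meet both $\{a,b\}$ and $\{a,c\}$; if $a\notin e$, then $e$ must contain $b$ and $c$, forcing $e=\{b,c\}$. Thus either every edge of $\mathcal{F}$ contains $a$ (a star, so $a$ is the common element), or $\{b,c\}\in \mathcal{F}$. In the latter subcase every edge of $\mathcal{F}$ must meet each of $\{a,b\}$, $\{a,c\}$, $\{b,c\}$, and a short split on whether $a\in e$ shows every such $2$-subset of $\{1,\ldots,5\}$ lies in $\Delta=\{a,b,c\}$. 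No step looks genuinely hard; the main care is tracking the definitions and remembering that the central positions ($|G_j(\x)|=5$) are absorbed trivially via the observation above, so they never need a separate treatment.
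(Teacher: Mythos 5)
Your proof is correct and follows essentially the same route as the paper: dispatch the $|G_j(\x)|=1$ case via Lemma~\ref{lem:one}, reduce the rest to the family of $2$-element sets $G_j(\x)$, and apply the pairwise-intersection property from Lemma~\ref{lem:com} together with the sunflower-or-triangle claim for intersecting $2$-sets. The only difference is that you spell out the proof of that set-theoretic claim, which the paper states without proof.
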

	\begin{proof}
  Recall that $G_j(\x)\ne \emptyset$ for each $j$.
	By Lemma~\ref{lem:one}, if $G_j(\x)=\{i\}$ for some $j$, then $\x$ is an $i$-border sequence.
	This lets us assume $|G_j(x)|\ge 2$, i.e. $|G_j(\x)|\in\{2,5\}$, for each $j$.
	Let $\F$ be the family of 2-element sets among $G_j(\x)$. By Lemma~\ref{lem:com} every two of them intersect,
	so we can apply the following easy set-theoretical claim.

	\begin{claim}
	Let $\G$ be a family of 2-element sets such that every two sets in $\G$ intersect.
	 Then sets in $\G$ share a common element
	or $\G$ contains exactly three sets with three elements in total.
	\end{claim}

	If all sets in $\F$ share an element $i$, then $\x$ is clearly an $i$-middle sequence.
	Otherwise $\x$ is a $\Delta$-triangle sequence for $\Delta=\bigcup \F$.
	\end{proof}

	\begin{fact}\label{fct:is}
	There exist 20 interval systems $\B_i,\M_i$ (for $i\in\{1,\ldots,5\}$)
	and $\T_{\Delta}$ (for 3-element sets $\Delta \sub \{1,\ldots,5\}$)
	such that:\vspace{-.2cm}
	\begin{enumerate}[(a)]
	  \item\label{it:border} $\B_i$ is consistent with all $i$-border sequences, $G_j(\B_{i,j})=\{i\}$ for proper $\B_{i,j}$;
	\item\label{it:middle} $\M_i$ is consistent with all $i$-middle sequences and if $\M_{i,j}$ is proper then $|G_j(\M_{i,j})|=2$ and $i\in G_j(\M_{i,j})$;
	\item\label{it:triangle} $\T_{\Delta}$ is consistent with all $\Delta$-triangle sequences and if  $\T_{\Delta,j}$ is proper then $|G_j(\T_{\Delta,_j})|=2$ and $G_j(\T_{\Delta,j})\sub\Delta$.
	\end{enumerate}
	\end{fact}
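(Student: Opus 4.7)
The plan is to construct each of the twenty interval systems explicitly, using the column permutations $\pi_j$, and then verify the claimed consistency and $G_j$--properties by a small case analysis at each position $j$. The key reference facts are the values of $G_j$ on the four proper basic intervals recalled just before the statement: $G_j([1,2])=\{\pi_j(1)\}$, $G_j([2,3])=\{\pi_j(1),\pi_j(2)\}$, $G_j([3,4])=\{\pi_j(4),\pi_j(5)\}$, $G_j([4,5])=\{\pi_j(5)\}$, together with the fact that $G_j(\x)=\{1,\ldots,5\}$ exactly when $x_j=s_{3,j}$ and $|G_j(\x)|\le 2$ otherwise.

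For part (\ref{it:border}), I set $\B_{i,j}=[1,2]$ when $\pi_j(1)=i$, $\B_{i,j}=[4,5]$ when $\pi_j(5)=i$, and $\B_{i,j}=[\pi_j^{-1}(i),\pi_j^{-1}(i)]$ (degenerate) otherwise. The property $G_j(\B_{i,j})=\{i\}$ for proper $\B_{i,j}$ is immediate. For consistency, an $i$-border sequence satisfies $x_j=a_{i,j}=s_{\pi_j^{-1}(i),j}$ or $G_j(\x)=\{i\}$; the latter forces $i\in\{\pi_j(1),\pi_j(5)\}$ and places $x_j$ inside the corresponding border half-interval, so in either case $x_j$ lies in my chosen $\B_{i,j}$.

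For part (\ref{it:middle}), an $i$-middle sequence has $i\in G_j(\x)$ with $|G_j(\x)|\ge 2$, so $G_j(\x)$ is either the whole center (and then $x_j=s_{3,j}$) or a two-element $G_j$ of a middle interval containing $i$. Accordingly I set $\M_{i,j}=[2,3]$ if $i\in\{\pi_j(1),\pi_j(2)\}$, $\M_{i,j}=[3,4]$ if $i\in\{\pi_j(4),\pi_j(5)\}$, and $\M_{i,j}=[3,3]$ (degenerate) if $i=\pi_j(3)$. In each proper case $G_j(\M_{i,j})$ is a two-element set containing $i$, and both admissible configurations of an $i$-middle sequence at position $j$ are in the chosen range. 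For part (\ref{it:triangle}), a $\Delta$-triangle sequence with $|\Delta\cap G_j(\x)|\ge 2$ has $|G_j(\x)|\ge 2$, so either $x_j=s_{3,j}$ or $G_j(\x)$ is a two-element subset of $\Delta$. Since $|\Delta|=3$, at most one of $\{\pi_j(1),\pi_j(2)\}$ and $\{\pi_j(4),\pi_j(5)\}$ can be contained in $\Delta$; I set $\T_{\Delta,j}=[2,3]$, $[3,4]$, or $[3,3]$ accordingly, which is well-defined and handles the center position via the shared endpoint $s_{3,j}$.

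The only non-routine point is the last subcase in (\ref{it:triangle}): when neither $\{\pi_j(1),\pi_j(2)\}$ nor $\{\pi_j(4),\pi_j(5)\}$ is a subset of $\Delta$, I must argue that every $\Delta$-triangle sequence satisfies $x_j=s_{3,j}$, justifying the degenerate choice. This is where I will be careful: the two-element values of $G_j(\x)$ are exactly those two sets, so if neither is in $\Delta$ the inequality $|\Delta\cap G_j(\x)|\ge 2$ can only be met at the center. The other verifications are straightforward bookkeeping once the correct interval has been chosen.
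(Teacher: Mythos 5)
Your construction coincides with the paper's: for each position $j$ you pick the (unique) border, middle, or degenerate interval determined by where $i$ (resp.\ $\Delta$) sits in $\pi_j$, and you verify consistency by the same case analysis on the possible values of $G_j(\x)$, including the key point that a two-element $G_j(\x)$ must be $\{\pi_j(1),\pi_j(2)\}$ or $\{\pi_j(4),\pi_j(5)\}$ and that these two sets are disjoint. The argument is correct and essentially identical to the paper's proof.
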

	\begin{proof}
	(\ref{it:border}) Let us fix a position $j$. For any $i$-border sequence $\x$, we know that $x_j = a_{i,j}$ or $G_{j}(\x)=\{i\}$.
	If either of the border intervals $I$ satisfies $G_j(I)=\{i\}$, we set $\B_{i,j}:=I$ (observe
	that $a_{i,j}$ is then consistent with $I$). Otherwise we choose $\B_{i,j}$
	so that it is degenerate and corresponds to $x_{j}=a_{i,j}$.

	\noindent
	(\ref{it:middle}) Again fix $j$. For any $i$-middle sequence $\x$, we know that $x_j$ is consistent with at least one of the two middle intervals
	(both if $x_j$ is in the center). If either of the middle intervals $I$ satisfies $i\in G_j(I)$, we choose $\M_{i,j}:=I$. (Note that this condition
	cannot hold for both middle intervals). Otherwise we know that $x_{j}$ is in the center and set $\M_{i,j}$
	so that it is degenerate and corresponds to $x_{j}$ in the center, i.e. $\M_{i,j}:=[3,3]$.

	\noindent
	(\ref{it:triangle}) We act as in (\ref{it:middle}), i.e. if either of the middle intervals $I$ satisfies $|G_j(I)\cap \Delta|=2$, we choose $\T_{\Delta,j}:=I$
	(because sets $G_{j}(I)$ are disjoint for both middle intervals, this condition cannot hold for both of them).
	Otherwise, we set $\T_{\Delta,j}:=[3,3]$, since $x_{j}$ is guaranteed to be in the center for any $\Delta$-triangle sequence $\x$.
	\end{proof}

  \section{Practical Algorithm for $k \le 5$}\label{sec:algo5}
  It suffices to consider $k=5$.
  Using Fact~\ref{fct:is} we reduce the number of interval systems from $(k-1)^{k!}=4^{5!}>10^{72}$ to $20$
  compared to the algorithm of Section~\ref{sec:ker}.
  Moreover, for each of them $\ILP(\I)$ admits structural properties,
  which lets us compute $\OPT(\A,\I)$ much more efficiently than using a general ILP solver.

 \begin{definition}
 A \pmilp is called \emph{easy} if for each constraint the number of $+1$ coefficients is $0$, $1$ or $n$,
 where $n$ is the number of variables.
 \end{definition}

 \begin{lemma}\label{lem:reduce}
For each $\I$  being one of the 20 interval systems $\B_i,\M_i$ and $\T_{\Delta}$,
$\ILP(\I)$ can be reduced to an equivalent easy \pmilp with up to 4 variables.
\end{lemma}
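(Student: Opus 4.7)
The plan is to do a case analysis over the three families of interval systems from Fact~\ref{fct:is} ($\B_i$, $\M_i$, and $\T_\Delta$), enumerating in each case the coefficient vectors $\E(x_j)$ that can arise from a proper entry $I_j$. Once the short list is in hand, Facts~\ref{fct:negate} and~\ref{fct:join} collapse it to the claimed number of variables, after which a direct inspection of the resulting coefficient matrix verifies the easy property.

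The starting point is that degenerate entries contribute no variables and only modify the constant terms, so only proper entries matter. For a proper $I_j=[t,t+1]$, the coefficient vector $\E(x_j)$ has $+1$ at the positions $\pi_j(1),\ldots,\pi_j(t)$ and $-1$ at the remaining positions. In the border system $\B_i$, the only possibilities are $[1,2]$ with $\pi_j(1)=i$ (a single $+1$ at row $i$) and $[4,5]$ with $\pi_j(5)=i$ (its negation), so Fact~\ref{fct:negate} unifies them and Fact~\ref{fct:join} leaves a single variable whose column has $+1$ only in row $i$; easiness is then trivial with $n=1$.

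In $\M_i$ a proper entry has $G_j(I_j)=\{i,m\}$ for some $m\in\{1,\ldots,5\}\setminus\{i\}$, and again the two interval choices for a fixed $m$ yield opposite coefficient vectors. Normalizing to the $[2,3]$-form ($+1$ at $\{i,m\}$, $-1$ elsewhere), we obtain at most one variable $y_m$ per $m$, hence at most $4$ in total; in the resulting matrix row $i$ is all $+1$ (four ones, i.e.\ $n$) while each row $m'\ne i$ has a single $+1$ in column $y_{m'}$, which matches the easy condition. For $\T_\Delta$ the analogous reasoning applies with $G_j(I_j)$ ranging over the three two-element subsets of $\Delta$, giving at most three variables once the $[3,4]$-form is chosen: the two rows outside $\Delta$ become all $+1$ (three ones, $n=3$), and each row inside $\Delta$ contains exactly one $+1$.

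The main obstacle, and the reason this is not purely mechanical, is choosing the normalization in Fact~\ref{fct:negate} correctly: in the $\M_i$ case the alternative $[3,4]$-form would leave $n-1=3$ ones in row $m'$, and in the $\T_\Delta$ case the $[2,3]$-form would leave two ones in each row indexed by $\Delta$, both violating easiness. Once one notices that within each system every pair of coefficient vectors is either equal or opposite, the correct global sign is forced and the verification reduces to routine case inspection.
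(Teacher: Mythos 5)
Your proof is correct and follows essentially the same route as the paper: degenerate entries contribute no variables, proper entries yield coefficient vectors determined up to sign by $G_j(I_j)$, Facts~\ref{fct:negate} and~\ref{fct:join} collapse them to one variable per governing set, and Fact~\ref{fct:is} bounds the count and forces the easy structure (your direct choice of the $[3,4]$-normalization for $\T_\Delta$ is just the paper's ``negate every variable at the end'' step done up front). The only blemish is the closing sentence: it is not true that \emph{every} pair of coefficient vectors within a system is equal or opposite (e.g.\ in $\M_i$ the vectors for distinct $m$ are neither), only pairs sharing the same governing set --- but your actual case analysis uses the correct statement, so this is a slip of phrasing rather than a gap.
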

\begin{proof}
Recall that for degenerate intervals $I_j$, we do not introduce variables.
On the other hand, if $I_j$ is proper, possibly negating the variable $x_j$ (Fact~\ref{fct:negate}),
we can make sure that the coefficient vector $\E(x_j)$ has $+1$ entries corresponding to $i\in G_j(I_j)$ and $-1$ entries for
the remaining $i$. Moreover, merging the variables (Fact~\ref{fct:join}), we end up with
a single variable per possible value $G_j(I_j)$. Now we use structural properties stated in Fact~\ref{fct:is}
to claim that the \pmilp we obtain this way, possibly after further variable negations, becomes easy.

\textbf{Border sequences.}
By Fact~\ref{fct:is}(\ref{it:border}), if $\B_{i,j}$ is proper, then $G_j(\B_{i,j})=\{i\}$
and thus the \pmilp has at most 1 variable and consequently is easy.

\textbf{Middle sequences.}
By Fact~\ref{fct:is}(\ref{it:middle}), if $\M_{i,j}$ is proper, then $G_j(\M_{i,j})=\{i,i'\}$
for some $i'\ne i$. Thus there are up to 4 variables, the constraint corresponding to $i$ has only $+1$ coefficients,
and the remaining constraints have at most one $+1$.

\textbf{Triangle sequences.}
By Fact~\ref{fct:is}(\ref{it:triangle}), if $\T_{\Delta,j}$ is proper, then $G_j(\T_{\Delta,j})$ is a
2-element subset of $\Delta$, and thus there are up to three variables.
Any \pmilp with up to two variables is easy,
and if we obtain three variables, then the constraints corresponding to $i\in \Delta$
have exactly two $+1$ coefficients, while the constraints corresponding to $i\notin \Delta$ have just $-1$ coefficients.
Now, negating each variable (Fact~\ref{fct:negate}), we get one $+1$ coefficient in constraints corresponding to $i\in \Delta$
and all $+1$ coefficients for $i\notin \Delta$.
\end{proof}

\noindent
The algorithm of Lenstra~\cite{Lenstra} with further improvements
\cite{kannan,franktardos,lokshtanov}, which runs in roughly $n^{2.5n + o(n)}$ time,
could perform reasonably well for $n=4$.
However, there is a simple $\Oh(n^2)$-time algorithm designed for easy \pmilp.
It can be found in Section~\ref{app:ilp}.

In conclusion, the algorithm for MSC problem first proceeds as described in Fact~\ref{fct:is} to obtain the interval systems $\B_i,\M_i$ and $\T_{\Delta}$.
For each of them it computes $\ILP(\I)$, as described in Section~\ref{sec:prelim},
and converts it to an equivalent easy \pmilp following Lemma~\ref{lem:reduce}.
Finally, it uses the efficient algorithm to solve each of these 20 \pmilp{s}.
The final result is the minimum of the optima obtained.

\section{Conclusions}
We have presented an $\Oh(\ell k \log k)$-time kernelization algorithm,
which for any instance of the MSC problem computes an equivalent
instance with $\ell' \le k!$.
Although for $k \le 5$ this gives an instance with $\ell'\le 120$, 
i.e.\ the kernel size is constant, 
solving it in a practically feasible time remains challenging.
Therefore for $k \le 5$ we have designed an efficient linear-time algorithm.

We have implemented the algorithm,
including retrieving the optimum consensus sequence (omitted in the description above) \footnote{
It is available at \url{http://www.mimuw.edu.pl/~kociumaka/files/msc.cpp}.
}.
For random input data with $\ell=10^6$ and $k=5$, the algorithm without kernelization achieved
the running time of 1.48015s, which is roughly twice the time required to read the data (0.73443s, not included in the former).
The algorithm pipelined with the kernelization achieved 0.33415s.
The experiments were conducted on a MacBook Pro notebook (2.3~Ghz Intel Core i7, 8 GB RAM).


\nopagebreak

  \bibliographystyle{abbrv}
  \bibliography{cs.bib}

\newpage

\appendix
\section{Solving easy \pmilp}\label{app:ilp}
  Recall the definition of an easy \pmilp:
  \begin{align*}
  \min z\\
  d_i + \sum_{j} x_j e_{i,j} \le z\\
  x_j \in R(x_j)
  \end{align*}
  where $e_{i,j}=\pm 1$,
  and for each constraint the number of $+1$ coefficients is $0$, $1$ or $n$,
  where $n$ is the number of variables.
  Moreover $R(x_j)=\{\ell_j,\ldots,r_j\}$ for integers $\ell_j\le r_j$.

  We define a \emph{canonical} form of easy \pmilp as follows:
    \begin{align*}
     \min\quad  &  z\\
      y_1+y_2+\ldots+ y_n +K' & \le z\\
      y_1-y_2- \ldots -y_n + K_1 & \le z\\
      -y_1+y_2- \ldots -y_n + K_2 & \le z\\
      \vdots \\
      -y_1-y_2 -\ldots +y_n + K_n & \le z \\
      y_i &\in \{0,\ldots,D_i\}
    \end{align*}
    where $D_i\in \pint$ and $K',K_1,\ldots,K_n$ are of the same parity.

    \begin{lemma}\label{lem:red}
    Any easy \pmilp with $n$ variables and at least one constraint
    can be reduced to two easy \pmilp{s} in the canonical form,
    with $n+1$ variables, so that the optimum value of the input \pmilp
    is equal to the smaller optimum value of the two output \pmilp{s}.
    \end{lemma}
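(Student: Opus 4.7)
I would approach the reduction in three stages: preprocessing, introduction of an auxiliary variable, and a parity-driven case split.

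First, I would preprocess the input so that it matches the canonical template as closely as possible without yet changing the number of variables. Translating each $y_j$ by $-\ell_j$ makes every $R(y_j)$ start at $0$, with the shift absorbed into the $d_i$'s. Merging any two constraints that share a coefficient vector (keeping only the one with the largest $d_i$, since the others are then implied) leaves at most $n+2$ constraints, because an easy \pmilp has only $n+2$ possible coefficient vectors: the all-$+1$ vector (type $*$), the all-$-1$ vector (type $0$), and the $n$ vectors with a single $+1$ in one coordinate (types $1,\ldots,n$). For every missing type I would pad with a trivial constraint (very small $d_i$) so that every type is present exactly once.

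Second, I would introduce a fresh variable $y_{n+1}$. The canonical form in $n+1$ variables has $n+2$ constraint types and they line up naturally with the preprocessed input: input type $*$ becomes canonical type $*$ (picking up a $+y_{n+1}$), each input type $i$ becomes canonical type $i$ (picking up a $-y_{n+1}$), and input type $0$ becomes canonical type $n+1$ (picking up a $+y_{n+1}$). If $D_{n+1}=0$ then $y_{n+1}$ is forced to $0$, so setting $K'=d_*$, $K_i=d_i$ ($i\le n$) and $K_{n+1}=d_0$ gives a canonical \pmilp literally identical to the preprocessed input. When $d_*,d_1,\ldots,d_n,d_0$ already share a common parity this single \pmilp is the whole answer, and the second output \pmilp can be chosen trivially (e.g.\ with an enormous constant that never improves the minimum).

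The main obstacle, and the reason the lemma permits two output \pmilps, is the parity requirement on $K',K_1,\ldots,K_{n+1}$. When the $d_i$'s do not all share a parity, I would give $y_{n+1}$ the range $\{0,1\}$ and build the two canonical \pmilps by guessing the common parity of the $K$'s (even or odd). For each of the two parity guesses I would round each $d_i$ to the nearest integer of the guessed parity and put the $\pm 1$ correction onto the coefficient of $y_{n+1}$, exploiting that flipping $y_{n+1}$ from $0$ to $1$ shifts the effective constants of canonical type $*$ and canonical type $n+1$ by $+1$ and those of canonical type $i$ ($i\le n$) by $-1$ — exactly the two shift patterns that this correction needs. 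The hardest step I expect will be to verify that the minimum of the two canonical optima equals $\OPT$ of the input: concretely, that for the parity guess matching the binding structure of the input optimum the canonical optimizer is forced to choose the value of $y_{n+1}$ that reproduces the input constraints exactly, while for the other guess every feasible canonical solution induces an input solution of no worse value. Once this "no spurious slack" property is pinned down, taking the smaller of the two canonical optima recovers $\OPT(\text{input})$, completing the reduction.
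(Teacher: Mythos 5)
Your preprocessing (translating ranges to start at $0$, merging constraints with equal coefficient vectors while keeping the largest constant, padding the missing coefficient types, and appending a variable $y_{n+1}$ to turn the all-$(-1)$ constraint into canonical type $n+1$) coincides with the paper's proof. The gap is in your parity step, which is also exactly where you stop short of an argument: the ``no spurious slack'' property you defer is false for the construction you describe. A single variable $y_{n+1}\in\{0,1\}$ shifts \emph{all} effective constants at once ($+1$ for types $*$ and $n+1$, $-1$ for types $i\le n$), whereas the correction is needed only on the subset of constraints whose constant has the wrong parity; so for either value of $y_{n+1}$ some constraint ends up strictly weaker than its original, and the optimizer exploits it. Concretely, take $n=2$, $y_1,y_2\in\{0,\dots,5\}$ and constraints $y_1+y_2+0\le z$, $y_1-y_2+0\le z$, $-y_1+y_2+0\le z$, $-y_1-y_2+1\le z$; the input optimum is $1$. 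For the even-parity guess your rule gives $K'=K_1=K_2=0$ and $K_3=d_0-1=0$, and then $y_1=y_2=y_3=0$, $z=0$ is feasible, so the smaller of your two output optima is $0\ne 1$.

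The paper resolves this differently: the dummy variable gets range $\{0\}$ (it only fixes the coefficient pattern and never compensates constants), and the two output programs are obtained by \emph{incrementing} every odd constant in one copy and every even constant in the other. Both copies are then at least as strong as the input, so neither optimum can drop below $\OPT$; and the input optimum survives in one copy because for any feasible $(z,y_1,\dots,y_{n+1})$ the slack of each constraint is congruent to $z+\sum_j y_j-d_i \pmod 2$, so every constraint whose $d_i$ has parity opposite to $z+\sum_j y_j$ has slack at least $1$ and tolerates the increment. If you want to salvage your variant, you must round every constant \emph{up} to the guessed parity (never down) and prove precisely this slack-parity fact; the mechanism of compensating via $y_{n+1}$ should be dropped.
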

    \begin{proof}
    First, observe that the substitution $x_j = y_j+\ell_j$
    affects only the constant terms and the ranges of variables, and
    gives a \pmilp where the range of $y_j$ is $\{0,\ldots,D_j\}$ for $D_j = r_j - \ell_j$.

    Recall that the number of $+1$ coefficients in a constraint of an easy \pmilp
    is 0, 1 or $n$.
    Note that if two constraints have equal coefficients for each variable,
    then the constraint with smaller constant term is weaker than the other, and thus it can be removed.
    Consequently, there are at most $n+2$ constraints: one (denoted $C_+$) only with coefficients $+1$,
    one (denoted $C_-$) only with coefficients $-1$, and for each variable $y_j$ one constraint (denoted $C_j$) with coefficient $+1$ for $y_j$ and 	$-1$ for the remaining variables.
    Also, unless there are no constraints at all, if some of these $n+2$ constraints
    are missing, one may add it, setting the constant term to a sufficiently small number,
    e.g.\ $K-2\sum_{i}D_i$, where $K$ is the constant term of any existing constraint $C$.
    Then, with such a constant term, the introduced constraint follows from the existing one.

    Finally, we introduce a dummy variable $y_{n+1}$ with $D_{n+1}=0$ and coefficient $+1$ in $C_-$ and $C_+$, and
    coefficient $-1$
    in the constraints $C_j$.
    This way $C_-$ can be interpreted as $C_{n+1}$ and we obtain an \pmilp in the canonical form, which
    does not satisfy the parity condition yet.

   Now, we replace the \pmilp with two copies, in one of them we increment (by one)
   each odd constant term, and in the other we increment (by one) each even constant term.
   Clearly, both these \pmilp{s} are stronger than the original \pmilp. Nevertheless
   we claim that the originally feasible solution remains feasible for one of them.

   Consider a feasible solution $(z,y_1,\ldots,y_{n+1})$ and the parity of $P = z+y_1+\ldots+y_{n+1}$.
    Note that changing a $+1$ coefficient to a $-1$ coefficient does not change the parity,
    so only constraints with constant terms of parity $P\bmod 2$ may be tight.
    For the remaining ones, we can increment the constant term without violating the feasibility.
    This is exactly how one of the output \pmilp{s} is produced.
    This completes the proof of the lemma.
    \end{proof}

    \SetKwFunction{FSolve}{Solve}

    In the following we concentrate on solving easy {\pmilp}s in the canonical form.
    Moreover, we assume that $K_1\le \ldots \le K_n$ (constraints and variables can be renumbered
    accordingly).
    We can solve such ILP using function \FSolve.
    This function is quite inefficient, we improve it later on.
    An explanation of the pseudocode is contained in the proof of the following lemma.

  \begin{function}
 \caption{Solve($K',K_1,\ldots,K_n, D_1,\ldots,D_n$)}
 \lIf{$n=1$}{\Return $\max(K',K_1)$}
 \While{$K'<K_n$}{
 $\ell := \max\{i : K_i = K_1\}$\;
 \For{$i=1$ \KwSty{to} $\ell$}{
 	\If{$D_i =0$}{\Return{$\FSolve(K',K_1,\ldots,K_{i-1},K_{i+1},\ldots,K_n, D_1,\ldots,D_{i-1},D_{i+1},\ldots,D_n)$}\;}
 }
	$K' := K'+1$\;
	\lFor{$i=1$ \KwSty{to} $\ell-1$}{$K_i := K_i-1$}
	$K_{\ell}:= K_\ell+1$\;
	$D_{\ell}:= D_{\ell}-1$\;
	\lFor{$i=\ell+1$ \KwSty{to} $n$}{$K_i := K_i-1$}
}
\Return{$K'$}\;
  \end{function}

    \begin{lemma}\label{lem:solve}
    \FSolve correctly determines the optimum value of the easy \pmilp in the canonical form,
    provided that $K_1\le \ldots \le K_n$.
    \end{lemma}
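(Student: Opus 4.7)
The plan is to proceed by induction on the number of variables $n$, combined with the invariant that each iteration of the \textbf{while} loop preserves the optimum of the current program. The base case $n=1$ is immediate: both constraints $y_1 + K' \le z$ and $y_1 + K_1 \le z$ are non-decreasing in $y_1$, so $y_1 = 0$ is optimal and yields $z = \max(K', K_1)$, matching the returned value. Likewise, when the loop exits with $K' \ge K_n$, we have $K' \ge K_i$ for every $i$; the choice $y = 0$ then attains $z = K'$, and no other feasible $y$ beats this because $C_+ \ge K' + \sum y_i \ge K'$. So the returned $K'$ equals the optimum in that exit case.

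For the iteration body, assume $K' < K_n$ and let $\ell = \max\{i : K_i = K_1\}$. In case~(a), where $D_i = 0$ for some $i \le \ell$, the variable $y_i$ is forced to $0$, and the reduced constraint $C_i$, namely $-\sum_{k \ne i} y_k + K_i \le z$, is dominated by every other $C_j$: indeed $C_j - C_i = (K_j - K_i) + 2 y_j \ge 0$ because $K_j \ge K_1 = K_i$ and $y_j \ge 0$. Removing $y_i$ together with $C_i$ therefore yields an equivalent canonical \pmilp with $n-1$ variables, to which the inductive hypothesis applies. In case~(b) no such $i$ exists, and the update exactly realizes the substitution $y_\ell := y_\ell^{\mathrm{new}} + 1$ with $y_\ell^{\mathrm{new}} \in \{0, \ldots, D_\ell - 1\}$: the change $+1$ to $C_+$ and to $C_\ell$ and $-1$ to every other $C_j$ matches the pseudocode. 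A routine check then shows that the canonical form, the common parity of $K', K_1, \ldots, K_n$, and the sorted order $K_1 \le \ldots \le K_n$ are all preserved; the last relies on the parity to conclude $K_{\ell+1} \ge K_1 + 2$ whenever $K_{\ell+1} > K_1$.

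The main obstacle will be showing that the substitution in case~(b) preserves the optimum. One direction is free because the substitution only restricts the feasible set to $y_\ell \ge 1$; for the converse I will exhibit an optimal solution $(z^*, y^*)$ with $y^*_\ell \ge 1$. If some $y^*_i \ge 1$ with $i \ne \ell$, the swap $y^*_\ell \leftrightarrow y^*_i$ leaves $C_+$ and every $C_j$ with $j \notin \{i, \ell\}$ unchanged while shifting $+2 y^*_i$ to $C_\ell$ and $-2 y^*_i$ to $C_i$; since $K_\ell = K_1 \le K_i$ one obtains $C_\ell^{\mathrm{new}} \le C_i^{\mathrm{old}} \le z^*$, so the swap yields another optimum with $y^*_\ell \ge 1$. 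Otherwise $y^* = 0$ and $z^* = K_n > K'$: if $\ell < n$, setting only $y_\ell := 1$ attains $z \le K_n - 1 < K_n$ by the parity bounds $K_n - K_1 \ge 2$ and $K_n - K' \ge 2$, contradicting optimality; if $\ell = n$, then all $K_i$ are equal and setting $y_\ell = y_1 = 1$ (feasible because $D_1 \ge 1$ in case~(b)) still attains $z = K_n = z^*$. Termination of the loop will follow from the potential $(K_n - K') \cdot N + \sum_i D_i$ for sufficiently large $N$: every non-recursive iteration strictly decreases it, while every recursive call reduces $n$.
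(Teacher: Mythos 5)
Your overall architecture matches the paper's proof: induction on $n$ for the recursive calls, a loop invariant showing each non-recursive iteration preserves the optimum, the domination argument $C_j - C_i = (K_j-K_i)+2y_j\ge 0$ for the $D_i=0$ case, and a key claim that some optimal solution has $y_\ell\ge 1$. The base case, the exit condition $K'\ge K_n$, the all-zeros subcases with the parity bounds $K'+2\le K_n$ and $K_1+2\le K_n$, the preservation of sortedness and common parity, and the termination potential are all correct and essentially identical to the paper's reasoning (your ``contradiction'' in the $y^*=0$, $\ell<n$ subcase is a harmless variant of the paper's ``still feasible at $z=K_n$'').

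There is, however, one genuine flaw: the ``swap'' in the key claim. When $y_i^*\ge 1$ for some $i\ne\ell$ and $y_\ell^*=0$, exchanging the two values sets the new $y_\ell$ equal to $y_i^*$. Case (b) only guarantees $D_\ell\ge 1$, and nothing prevents $y_i^* > D_\ell$ (say $D_\ell=1$, $D_i=10$, $y_i^*=5$), in which case the swapped point violates the range $y_\ell\in\{0,\ldots,D_\ell\}$ and the feasibility argument collapses; your inequality $C_\ell^{\mathrm{new}}\le C_i^{\mathrm{old}}$ is fine, but feasibility of the variable ranges is not. The repair is what the paper does: transfer a single unit, i.e.\ set $y_\ell:=1$ and $y_i:=y_i^*-1$. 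This is feasible since $D_\ell\ge 1$, keeps $S=\sum_j y_j$ fixed, changes only $C_\ell$ (by $+2$) and $C_i$ (by $-2$), and the new $C_\ell$, namely $z\ge K_\ell-S+2$, is implied by the old $C_i$, namely $z\ge K_i-S+2y_i^*$, because $K_\ell\le K_i$ and $y_i^*\ge 1$. With that one substitution your proof goes through and coincides with the paper's.
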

    \begin{proof}
    If $n=1$ then the constraints are $y_1+K'\le z$ and $y_1+K_1\le z$, therefore the optimum is clearly $y_1=0$
    and $z=\min(K',K_1)$. Consequently line 1 is correct and in the following we may assume that $n\ge 2$.

    Due to the constraint $\sum_{i} y_i + K'\le z$ and the fact that $y_i$ are non-negative,
    the value $K'$ is clearly a lower bound on $z$. Moreover, if $K_n \le K'$ then taking $y_1=\ldots=y_n=0$
    and $z=K'$ is feasible, since in that case all the constraints reduce to $K_i \le K'$
    which is true due to monotonicity of $K_i$.
    Thus, we may exit the while-loop and return $K'$ in line 12 when $K_n \le K'$.

    Let us define $\ell$ as in line 3.
    If $D_i=0$ for some $1 \le i \le \ell$, then $y_i$ must be equal to $0$, and thus this variable can be removed. Then the constraint $C_i$ becomes of the form
    $-\sum_{j\ne i}y_j + K_i \le z$.
    However, since $n\ge 2$, there is another constraint $C_{i'}$
    of the form
    $y_{i'}-\sum_{j \ne i'}y_j+K_{i'} \le z$, which becomes
    $2y_{i'} -\sum_{j\ne i}y_j + K_{i'}\le z$ after $y_i$ is removed.
    Because of the inequality $K_{i'}\ge K_i$ and due to non-negativity
    of $y_{i'}$, $C_{i'}$ gives a stronger lower bound on $z$ than $C_i$, i.e.\ the constraint $C_i$ can be removed.
    Thus lines 3-6 are correct. Also, note that in the recursive call the monotonicity assumption still holds,
    also constant terms are still of the same parity.

    Consequently, in lines 7-11 we may assume that $D_i>0$ for $i\le \ell$.
    The arithmetic operations in these lines correspond to replacing
    $y_{\ell}$ with $y_{\ell}+1\in \{1,\ldots,D_{\ell}\}$. This operation is valid
    provided that the following claim holds.

    \begin{claim}
    There exists an optimal solution $(z,y_1,\ldots,y_n)$ with $y_{\ell}\ge 1$.
    \end{claim}
    \begin{proof}
    Let us take an optimal solution $(z,y_1,\ldots,y_n)$. If $y_\ell\ge 1$ we are done, so assume $y_{\ell}=0$.
    We consider several cases.
    First, assume that there exists some $i$ with $y_i\ge 1$.
    Replace $y_{i}$ with $y_{i}-1$ and set $y_{\ell}$ to $1$. By $D_{\ell}\ge 1$ this is feasible
    with respect to variable ranges. The only constraints that change are $C_\ell$ and $C_i$.
    Set $S=\sum_j y_j$, which also does not change. Note that $C_{\ell}$ changes from $z\ge K_{\ell}-S$ to $z\ge K_{\ell}-S+2$
    and $C_{i}$ changes from $z\ge K_{i}-S+2y_{i}$ to $z\ge K_{i}-S+2y_{i}-2$. Thus
    $C_i$ only weakens and the new $C_{\ell}$ is weaker than the original $C_i$, since $y_{i}\ge 1$ and $K_i\ge K_{\ell}$.
    Consequently, unless $y_1=\ldots=y_n=0$ is the only optimum, there exists an optimum with $y_{\ell}=1$.

    Thus, assume $y_1=\ldots = y_n = 0$ is an optimum. Since $K'<K_n$, we have $z=K_n$ for the optimum.
    Now, we consider two cases. First, assume $\ell<n$.
    We claim that after setting $y_\ell:=1$, the solution is still feasible.
    The only constraints that become stronger are $C_+$ and $C_{\ell}$. For the former we need to show
    that $K'+1\le K_n$, which is satisfied since $K'<K_n$. For the latter we need $K_\ell+1\le K_n$,
    which is also true since, by $\ell<n$, we have $K_{\ell}<K_n$.

    Now, assume $\ell=n$. By $n\ge 2$ this means that $\ell>1$.
    Therefore we can increment both $y_1$ and $y_{\ell}$ to 1 and observe that only $C_+$ became stronger,
    and we need to show that $K'+2\le K_n$. However, since $K'$ and $K_n$ are of the same parity,
    this is a consequence of $K'<K_n$.
    
    \end{proof}
    The claim proves that lines 7-11 are correct.
    Because $K_{\ell}$ and $K_{\ell+1}$ satisfied $K_{\ell}<K_{\ell+1}$, that is, $K_{\ell}+2\le K_{\ell+1}$,
    before these lines were executed
    and each $K_i$ changed by at most 1, they now satisfy $K_{\ell}\le K_{\ell+1}$,
    so the monotonicity is also preserved.
    Also, observe that after operations in lines 7-11 all constant terms have the same parity.

    Finally, observe that each iteration decreases $\sum_i (1+D_i)$ and thus the procedure terminates.
    
    \end{proof}

    \SetKwFunction{FSolveTwo}{Solve2}

    Now we transform the \FSolve function into a more efficient version which we call \FSolveTwo.
    The latter simulates selected series of iterations of the while-loop of the former
    in single steps of its while-loop.

\SetKwFunction{FSolveTwo}{Solve2}
 \begin{function}
 \caption{Solve2($K',K_1,\ldots,K_n, D_1,\ldots,D_n$)}
 \lIf{$n=1$}{\Return $\max(K',K_1)$}
 \While{$K'<K_n$}{
 $\ell := \max\{i : K_i = K_1\}$\;
 \For{$i=1$ \KwSty{to} $\ell$}{
 	\If{$D_i =0$}{\Return{\FSolveTwo$(K',K_1,\ldots,K_{i-1},K_{i+1},\ldots,K_n, D_1,\ldots,D_{i-1},D_{i+1},\ldots,D_n)$}\;}
 }
 \lIf{$\ell<n$ \KwSty{and} $K_n-K' < 2\ell$}{\Return{$\frac{1}{2}(K'+K_n)$}}
 \lIf{$\ell=n$ \KwSty{and} $K_n-K'< 2(n-1)$}{\Return{$1+\frac{1}{2}(K'+K_n)$}}
	$K' := K'+\ell$\;
	\For{$i=1$ \KwSty{to} $\ell$}{
		$K_i := K_i + 2-\ell$\;
		$D_i := D_i-1$\;
	}
	\lFor{$i=\ell+1$ \KwSty{to} $n$}{
		$K_i := K_i-\ell$}
}
\Return{$K'$}\;
\end{function}

\begin{lemma}
Functions \FSolveTwo and \FSolve are equivalent for all inputs corresponding to easy \pmilp{s} in the canonical form with $K_1\le \ldots \le K_n$.
\end{lemma}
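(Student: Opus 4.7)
I would prove the equivalence by induction on the well-founded measure $n + D_1 + \cdots + D_n$, showing that both functions return identical values on every admissible input. The base cases are immediate: if $n = 1$ both return $\max(K', K_1)$; if $K' \ge K_n$ both terminate the while-loop at once and return $K'$; and if some $D_i = 0$ with $i \le \ell$ for the current $\ell$, both functions recurse on the same strictly smaller instance and the inductive hypothesis applies.

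The core of the argument is that one iteration of the while loop of \FSolveTwo simulates exactly $\ell$ (or fewer, in the boundary scenarios handled by lines 7--8) consecutive iterations of the while loop of \FSolve. Writing $K$ for the common value $K_1 = \cdots = K_\ell$, a short induction on $t$ would show that the pivot at step $t$ of \FSolve is the index $\ell - t + 1$, and that after $t$ steps one has
\[
  K'^{(t)} = K' + t, \quad K_i^{(t)} = K - t \text{ for } i \le \ell - t, \quad K_i^{(t)} = K - t + 2 \text{ for } \ell - t < i \le \ell,
\]
together with $K_i^{(t)} = K_i - t$ for $i > \ell$ (vacuous if $\ell = n$) and $D_{\ell - t + 1}$ decremented by one. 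Monotonicity and common parity are easily preserved at each step, and, crucially, the $D_i = 0$ guard in lines 4--6 cannot trigger in intermediate iterations because at step $t$ it only inspects $D_1, \ldots, D_{\ell - t}$, none of which has been touched; these were verified to be positive when \FSolveTwo's current while-iteration began.

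Setting $t = \ell$ yields exactly the state transformation encoded by lines 9--13 of \FSolveTwo, so whenever \FSolve completes the full batch the two procedures remain in identical states and the induction closes that case. It remains to identify when the while-condition $K'^{(t)} < K_n^{(t)}$ fails partway through; this is what lines 7--8 detect. For $\ell < n$ we have $K_n^{(t)} = K_n - t$, so equality $K'^{(t)} = K_n^{(t)}$ first occurs at $t = m := (K_n - K')/2$ with common value $(K' + K_n)/2$, matching line 7. For $\ell = n$ the index $n$ is itself the pivot at step 1, so $K_n^{(t)} = K_n + 2 - t$ for $t \ge 1$, and the while-condition first fails at $t = m + 1$ giving return value $K' + m + 1 = (K' + K_n)/2 + 1$, matching line 8.

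The main obstacle I anticipate is precisely the $\ell = n$ case, where $K_n$ is both incremented (at step 1, as the pivot) and then repeatedly decremented; here the common-parity hypothesis on the $K_i$'s is essential for turning the strict inequality $K'^{(t)} < K_n^{(t)}$ into the even-integer threshold $2(n-1)$ used in line 8, rather than the naive $2n$ one might guess. Once this bookkeeping is in place, the remaining verification amounts to routine arithmetic matching of the updates in lines 9--13 against the aggregated updates of $\ell$ consecutive steps of \FSolve.
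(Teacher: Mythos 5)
Your proof is correct and follows essentially the same route as the paper's: showing that one iteration of \FSolveTwo simulates $\ell$ consecutive iterations of \FSolve, that the $D_i=0$ guard cannot fire mid-batch, and that the parity of the constant terms pins down the early-termination thresholds $2\ell$ and $2(n-1)$ together with the returned values. Your explicit closed-form invariant for the state after $t$ steps (pivot $\ell-t+1$, values $K-t$ and $K-t+2$) is a slightly more detailed bookkeeping than the paper's prose, but the underlying argument is the same; only note that after $t$ steps all of $D_{\ell},\ldots,D_{\ell-t+1}$ have been decremented, not just the last pivot's.
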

\begin{proof}
Lines 1-6 of  \FSolve and \FSolveTwo coincide, so we can restrict to inputs for which
both functions reach line 7.
Recall from the proof of Lemma~\ref{lem:solve} that this implies $n \ge 2$, $K'<K_n$ and $D_1,\ldots,D_\ell\ge 1$.
We claim that lines 7-13 simulate the first $\ell$ iterations of the while-loop of \FSolve
(or all $<\ell$ iterations if the loop execution is interrupted earlier).
Note that the value of $\ell$ decrements by 1 in each of these iterations excluding the last one.
Also, in each of these iterations the condition in line 5 is false, since in the previous iteration
the respective $D_i$ variables did not change.
Thus, after an iteration with $\ell>1$, the execution
of the following $\ell-1$ iterations of the while-loop may only be interrupted
due to the $K'<K_n$ condition (line 2).

Note that if $\ell<n$ then the difference $K_n-K'$ decreases by exactly 2 per iteration,
as $K_n$ is decremented while $K'$ is incremented.
Thus, starting with $\ell<n$ the loop fails to make $\ell$ steps only if $K_n-K' < 2\ell$
and it happens after $\frac{1}{2}(K_n-K')$ iterations (due to equal parity of constant terms this is an integer).
Then, the value $K'$, incremented at each iteration, is $\frac{1}{2}(K'+K_n)$ in terms of the original values.

On the other hand, if $\ell=n$, then in the first iteration $K_n-K'$ does not change, since $K'$ and $K_n$ are both
incremented. The while-loop execution is interrupted before $\ell$ iterations only if $K_n-K'< 2(n-1)$,
and this happens after $1+\frac{1}{2}(K_n-K')$ iterations, when $K'$, incremented at each iteration, is $1+\frac{1}{2}(K'+K_n)$ in terms of the original values.

Consequently, lines 7 and 8 are correct, and if \FSolveTwo proceeds to line 9, we are guaranteed that \FSolve would
execute at least $\ell$ iterations, with value $\ell$ decremented by 1 in each iteration.
It is easy to see that lines 9-13 of \FSolveTwo simulate execution of lines 7-11 of \FSolve in these $\ell$ iterations,
which corresponds to simultaneously replacing $y_1,\ldots,y_{\ell}$ with $y_{1}+1,\ldots,y_{\ell}+1$.
\end{proof}

\SetKwFunction{FSolveThree}{Solve3}

Finally we perform the final improvement of the algorithm for solving our special ILP,
which yields an efficient function \FSolveThree.
Again what actually happens is that series of iterations of the while-loop from \FSolveTwo
are now performed at once.

 \begin{function}
 \caption{Solve3($K',K_1,\ldots,K_n, D_1,\ldots,D_n$)}
 \lIf{$n=1$}{\Return $\max(K',K_1)$}
 \While{$K'<K_n$}{
 $\ell := \max\{i : K_i = K_1\}$\;
 \For{$i=1$ \KwSty{to} $\ell$}{
 	\If{$D_i =0$}{\Return{\FSolveThree$(K',K_1,\ldots,K_{i-1},K_{i+1},\ldots,K_n, D_1,\ldots,D_{i-1},D_{i+1},\ldots,D_n)$}\;}
 }
\If{$\ell<n$}{
	$\Delta := \min(D_1,\ldots,D_{\ell}, \frac{1}{2}(K_{\ell+1}-K_{\ell}), \floor{\frac{1}{2\ell}(K_n-K')})$\;
	\lIf{$\Delta = 0$}{\Return{$\frac{1}{2}(K'+K_n)$}}
}\Else{
	$\Delta := \min(D_1,\ldots,D_n, \lfloor{\frac{1}{2(n-1)}(K_n-K')\rfloor})$\;
	\lIf{$\Delta = 0$}{\Return{$1+\frac{1}{2}(K'+K_n)$}}
}
$K' := K'+\Delta\cdot\ell$\;
\For{$i=1$ \KwSty{to} $\ell$}{
		$K_i := K_i + \Delta(2-\ell)$\;
		$D_i := D_i-\Delta$\;
	}
	\lFor{$i=\ell+1$ \KwSty{to} $n$}{
		$K_i := K_i-\Delta\cdot\ell$}
}
\Return{$K'$}\;
\end{function}
\begin{lemma}\label{lem:eq23}
Functions \FSolveThree and \FSolveTwo are equivalent for all inputs corresponding to easy \pmilp{s} in the canonical form with $K_1\le \ldots \le K_n$.
Moreover, \FSolveThree runs in $\Oh(n^2)$ time (including recursive calls).
\end{lemma}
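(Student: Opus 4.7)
The plan is to show that one iteration of \FSolveThree's main while-loop with the chosen $\Delta$ simulates exactly $\Delta$ consecutive iterations of \FSolveTwo, and then to bound the total number of iterations via a potential argument. Lines 1--6 of both functions coincide, so the equivalence analysis starts from a state where $n\ge 2$, $K'<K_n$, and $D_1,\ldots,D_\ell \ge 1$, as established in the proof of Lemma~\ref{lem:solve}. A single update iteration of \FSolveTwo is a linear transformation of $(K',K_1,\ldots,K_n,D_1,\ldots,D_n)$, so iterating it $\Delta$ times yields exactly the arithmetic in lines 13--15 of \FSolveThree, \emph{provided} that $\ell$ stays fixed and no $D_i$ with $i\le \ell$ reaches $0$ in between. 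The three terms in the definition of $\Delta$ are precisely the sharp bounds guaranteeing this: $\Delta\le D_i$ for each $i\le \ell$ prevents an early $D_i=0$ return; $2\Delta \le K_{\ell+1}-K_\ell$ (when $\ell<n$) preserves $K_\ell<K_{\ell+1}$ across intermediate iterations, using that each iteration shrinks $K_{\ell+1}-K_\ell$ by exactly $2$; and $2\ell\,\Delta \le K_n-K'$ (or $2(n-1)\Delta \le K_n-K'$ when $\ell=n$) preserves the while-loop condition $K'<K_n$, using that one iteration shrinks $K_n-K'$ by $2\ell$ or by $2(n-1)$, respectively. The $\Delta=0$ case can only arise through the $K_n-K'$ term, because $D_i\ge 1$ and $K_{\ell+1}-K_\ell\ge 2$ (the latter from strict monotonicity plus the common-parity invariant); this matches the conditions in lines 7--8 of \FSolveTwo and the returned values agree.

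For the $\Oh(n^2)$ runtime I would introduce the potential $\Phi = 2n'-\ell$, where $n'$ is the current number of variables. Each iteration of \FSolveThree's while-loop falls into one of three types. In type (i) the bound $\frac{1}{2}(K_{\ell+1}-K_\ell)$ is tight, so $\ell$ jumps up by at least $1$ in the next iteration and $\Phi$ drops by at least $1$. In type (ii) some $D_i$ bound is tight, so the next iteration triggers recursion, which decreases $n'$ by $1$ and $\ell$ by exactly $1$, so $\Phi$ again drops by $1$. In type (iii) the $K_n-K'$ bound is tight while $\ell$ does not change, and the subsequent iteration immediately returns via the $\Delta=0$ case. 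Because $\Phi$ lies in a range of size $\Oh(n)$, types (i) and (ii) together contribute $\Oh(n)$ iterations in total, while type (iii) (together with its trailing $\Delta=0$ return) occurs at most once per recursive call, with at most $n$ such calls, so altogether there are $\Oh(n)$ while-loop iterations. Since each iteration performs $\Oh(n)$ arithmetic on the $K_i$'s and $D_i$'s, the overall time complexity is $\Oh(n^2)$.

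The main obstacle I anticipate is the careful bookkeeping in the simulation argument, in particular verifying that after $\Delta$ iterations the equal values $K_1=\ldots=K_\ell$ evolve exactly as prescribed by lines 13--15, and that the parity and monotonicity invariants required by Lemma~\ref{lem:solve} are preserved in the state handed to the next iteration or recursive call.
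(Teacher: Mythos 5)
Your proposal is correct and follows essentially the same route as the paper's own proof: $\Delta$ is exactly the largest number of uninterrupted \FSolveTwo iterations (bounded by the $D_i$, by $\frac{1}{2}(K_{\ell+1}-K_\ell)$, and by the $K_n-K'$ slack), the $\Delta=0$ case is matched to lines 7--8 of \FSolveTwo via $D_i\ge 1$ and the parity-forced $K_{\ell+1}-K_\ell\ge 2$, and the iteration count is bounded by the same potential $2n-\ell$. The only cosmetic difference is that you count type~(iii) iterations once per recursive call whereas the paper notes the very next iteration terminates the whole procedure, but both give $\Oh(n)$ iterations and hence $\Oh(n^2)$ time.
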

\begin{proof}
Lines 1-6 of \FSolveTwo and \FSolveThree coincide, so we can restrict to inputs for which
both functions reach line 7. This in particular means that $D_1,\ldots,D_\ell \ge 1$.
Also, the definition of $\ell$ and the equal parity of constant terms mean that $K_{\ell+1}\ge K_{\ell}+2$.
Thus, if $\Delta=0$ in line 9, then $K_n-K'<2\ell$, and similarly in line 12,
$\Delta=0$ only if $K_n-K'<2(n-1)$. These are exactly
conditions that \FSolveTwo checks in lines 7 and 8, and consequently \FSolveThree simulates the execution of \FSolveTwo if
one of these conditions is satisfied.

Consequently, in the following we may assume that $\Delta>0$.
Observe that during the execution of the while-loop in \FSolveTwo,
the value $\ell$ may only increase. Indeed, the operations in lines 9-13 preserve the equality $K_1=\ldots=K_{\ell}$
and decrease by 2 the difference $K_{\ell+1}-K_{\ell}$, which leads to an increase of $\ell$ after $\frac{1}{2}(K_{\ell+1}-K_{\ell})$
steps.
Also, in a single iteration, the difference $K_n-K'$ decreases by either $2\ell$ if $\ell<n$ or $2(n-1)$ if $\ell=n$.
Consequently, unless the execution is interrupted, each value included in the minima in lines 8 and 11
decreases by exactly 1.
Moreover, the execution is interrupted only if one of these values reaches 0,
so we are guaranteed that $\Delta$ iterations will not be interrupted and can simulate them in bulk.
This is what happens in lines 13-17 of \FSolveThree, which correspond to lines 9-13 of \FSolveTwo.
Thus, \FSolveTwo and \FSolveThree are indeed equivalent. It remains to justify the quadratic
running time of the latter.

\medskip
Observe that each iteration of the while-loop of \FSolveThree can be executed in $\Oh(n)$ time.
Thus, it suffices to show that, including the recursive calls, there are $\Oh(n)$ iterations in total.
We show that $2n-\ell$ decreases not less frequently than every second iteration.
First, note that the recursive call decrements both  $n$ and $\ell$ by one, and thus it
also decrements $2n-\ell$. Therefore it suffices to consider an iteration
which neither calls \FSolveThree recursively nor terminates the whole procedure.
Then lines 13-17 are executed. If $\Delta$ was chosen so that $\Delta=D_i$,
the next iteration performs a recursive call, and thus $2n-\ell$ decreases after this iteration.
On the other hand, if $\Delta = \frac{1}{2}(K_{\ell+1}-K_{\ell})$, then after executing lines 13-17
it holds that $K_{\ell}=K_{\ell+1}$ and thus in the next iteration $\ell$ is increases, while $n$ remains unchanged,
so $2n-\ell$ decreases. Finally, if the previous conditions do not hold, then $\Delta =  \floor{\frac{K_n-K'}{2\min(\ell,n-1)}}$ and in the next iteration
line 7 is reached again, but then in line 8 or 11 $\Delta$ is set to 0, and the procedure is terminated
in line 9 or 12. This shows that it could not happen that three consecutive iterations
(possibly from different recursive calls) do not decrease $2n-\ell$.
This completes the proof that \FSolveThree executes $\Oh(n)$ iterations in total.
\end{proof}

Lemmas~\ref{lem:solve}-\ref{lem:eq23} together show that \FSolveThree
correctly solves any \pmilp in the canonical form in $\Oh(n^2)$ time.
Moreover, note that the algorithm is simple and has a small constant hidden in the $\Oh$
notation. Also, with not much effort, the optimal solution $(y_1,\ldots,y_n)$ can be recovered.
Combined with a reduction of Lemma~\ref{lem:red}, we obtain the following result.
\begin{corollary}
Any $n$-variate easy \pmilp can be solved in $\Oh(n^2)$ time.
\end{corollary}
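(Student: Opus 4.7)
The plan is to assemble the preceding pieces in a straightforward way. Given an arbitrary $n$-variate easy \pmilp $P$, I would first invoke Lemma~\ref{lem:red} to produce two easy \pmilp{s} $P_1, P_2$ in canonical form, each with $n+1$ variables, such that $\OPT(P) = \min(\OPT(P_1), \OPT(P_2))$. Inspecting the reduction of Lemma~\ref{lem:red}, the substitutions $x_j = y_j + \ell_j$, the removal of dominated duplicate constraints, the possible insertion of missing $C_+$, $C_-$, $C_j$ constraints (with a sufficiently small dummy constant), the addition of the dummy variable $y_{n+1}$, and finally the parity shift producing two copies, can all be done in $\Oh(n^2)$ time total (the only potentially non-linear step being the comparison of coefficient vectors, which is trivially $\Oh(n^2)$).

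Next, before calling the solver, I need to ensure the monotonicity hypothesis $K_1 \le \ldots \le K_{n+1}$ required by Lemma~\ref{lem:solve} and Lemma~\ref{lem:eq23}. This is achieved by a single sort of the $n+1$ constraints by their constant terms, renaming variables accordingly; this costs $\Oh(n \log n)$, which is absorbed into the $\Oh(n^2)$ budget.

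I would then invoke \FSolveThree on each of $P_1$ and $P_2$. By Lemma~\ref{lem:solve}, \FSolve returns the correct optimum; by the equivalence of \FSolve and \FSolveTwo and the equivalence of \FSolveTwo and \FSolveThree established in Lemma~\ref{lem:eq23}, \FSolveThree returns the correct optimum as well. The same lemma bounds its running time (including all recursive calls) by $\Oh((n+1)^2) = \Oh(n^2)$. Returning $\min(\FSolveThree(P_1), \FSolveThree(P_2))$ then yields $\OPT(P)$ in $\Oh(n^2)$ time overall.

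Since all of the substantive content has been proved in the preceding lemmas, there is no real obstacle; the only care needed is in verifying that the reduction of Lemma~\ref{lem:red} is itself implementable within the $\Oh(n^2)$ budget and that the hypotheses of the solver lemmas (canonical form, monotonicity, equal parity of constant terms) are actually met by the outputs of the reduction — all of which is transparent from the statement and proof of Lemma~\ref{lem:red}.
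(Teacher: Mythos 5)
Your proposal is correct and follows essentially the same route as the paper: apply the reduction of Lemma~\ref{lem:red} to obtain two canonical-form easy \pmilp{s} with $n+1$ variables, reorder constraints so that the constant terms are monotone, solve each with \FSolveThree (whose correctness and $\Oh(n^2)$ running time are given by Lemmas~\ref{lem:solve}--\ref{lem:eq23}), and return the minimum. Your additional checks that the reduction itself runs in $\Oh(n^2)$ time and that its outputs satisfy the solver's hypotheses are exactly the implicit glue the paper relies on.
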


\medskip
\noindent
The natural open question is whether our approach
extends to larger values of $k$, e.g.\ to $k=6$. It seems
that while combinatorial properties could still be used
to dramatically reduce the number of interval systems required to be processed,
the resulting {\pmilp}s are not easy and have more variables.
Consider, for example, a special case where for each column
there are just 2 different values, each occurring 3 times. Then, there is a single
interval sequence consistent with each $\x$ satisfying Observation~\ref{obs:bet},
it also corresponds to taking a median in each column.
Nevertheless, after simplifying the ILP
there are still $\frac{1}{2}{6 \choose 3}=10$ variables, and the \pmilp is not an easy \pmilp.

\end{document}